
\documentclass[journal,12pt,onecolumn,draftclsnofoot]{IEEEtran}

\usepackage{amsmath} 
\usepackage{amsthm}
\usepackage{amssymb}
\usepackage{graphicx}
\usepackage{setspace}
\usepackage{soul}

\makeatletter
\theoremstyle{plain}
\newtheorem{thm}{\protect\theoremname}
\theoremstyle{plain}
\newtheorem{lem}[thm]{\protect\lemmaname}
\newtheorem*{remark}{\protect\remarkname}

\makeatother

\providecommand{\lemmaname}{Lemma}
\providecommand{\theoremname}{Theorem}
\providecommand{\remarkname}{Remark - Practical Implementation Issues}

\newcommand\blfootnote[1]{%
  \begingroup
  \renewcommand\thefootnote{}\footnote{#1}%
  \addtocounter{footnote}{-1}%
  \endgroup
}

\begin{document}

\pagestyle{empty}

\title{Non-Orthogonal eMBB-URLLC Radio Access for Cloud Radio Access Networks with Analog Fronthauling}
\author{Andrea Matera, Rahif Kassab, Osvaldo Simeone and Umberto Spagnolini}

\maketitle

\thispagestyle{empty}

\begin{abstract}
This paper considers the coexistence of Ultra Reliable Low Latency
Communications (URLLC) and enhanced Mobile BroadBand (eMBB) services
in the uplink of Cloud Radio Access Network (C-RAN) architecture based
on the relaying of radio signals over analog fronthaul links. While
Orthogonal Multiple Access (OMA) to the radio resources enables the
isolation and the separate design of different 5G services, Non-Orthogonal
Multiple Access (NOMA) can enhance the system performance by sharing
wireless and fronthaul resources. This paper provides an information-theoretic
perspective in the performance of URLLC and eMBB traffic under both
OMA and NOMA. The analysis focuses on standard cellular models with
additive Gaussian noise links and a finite inter-cell interference
span, and it accounts for different decoding strategies such as puncturing, Treating Interference as Noise (TIN) and Successive Interference Cancellation
(SIC). Numerical results demonstrate that, for the considered analog
fronthauling C-RAN architecture, NOMA achieves higher eMBB rates with
respect to OMA, while guaranteeing reliable low-rate URLLC communication
with minimal access latency. Moreover, NOMA under SIC is seen to achieve
the best performance, while, unlike the case with digital capacity-constrained
fronthaul links, TIN always outperforms puncturing.
\end{abstract}

\begin{IEEEkeywords}
network slicing, RoC, URLLC, eMBB, C-RAN.
\end{IEEEkeywords}

\blfootnote{Andrea Matera and Umberto Spagnolini are with Dipartimento di Elettronica, Informazione e Bioingegneria (DEIB), Politecnico di Milano, 20133 Milano, Italy; {andrea.matera,umberto.spagnolini}@polimi.it.\\
Rahif Kassab and Osvaldo Simeone are with Centre for Telecommunications Research (CTR), Department of Informatics, King\textquoteright s College London, London WC2B 4BG, UK; {rahif.kassab,osvaldo.simeone}@kcl.ac.uk.
\\ \\This work has received funding from the European Research Council (ERC) under the European
Union Horizon 2020 Research and Innovation Programme (Grant Agreement No. 725731).} 

\section{Introduction}

Accommodating the heterogeneity of users' requirements is one of the
main challenges that both industry and academia are facing in order
to make 5G a reality \cite{Shafi-Molish-etal_2017}. In fact, next-generation
wireless communication systems must be designed to provision
different services, each of which with distinct constraints in terms
of latency, reliability, and information rate. In particular, 5G is
expected to support three different macro-categories of services,
namely enhanced Mobile BroadBand (eMBB), massive Machine-Type Communications
(mMTC), and Ultra-Reliable and Low-Latency Communications (URLLC)
\cite{5g2016view,MinimumRequierementsITU,3GPP_TR38802}. 

eMBB service is meant to provide very-high data-rate communications
as compared with current (4G) networks. This can be generally achieved
by using codewords that spread over a large number of time-frequency
resources, given that latency is not an issue. mMTC supports low-rate
bursty communication between a massive number of uncoordinated devices and the network. Finally, URLLC is designed to ensure low-rate
ultra-reliable radio access for a few nodes, while guaranteeing very
low-latency. As a result, URLLC transmissions need to be localized
in time, and hence URLLC packets should be short \cite{Durisi-Koch-Popovski_2016}.

The coexistence among eMBB, mMTC and URLLC traffic types can be ensured
by slicing the Radio Access Network (RAN) resources into non-overlapping,
or orthogonal, blocks, and by assigning distinct resources to different
services. With the resulting Orthogonal Multiple Access (OMA), the
target quality-of-service guarantees can be achieved by designing
each service separately \cite{Zhang-Liu_etal_2017,Popovski-Nielsen_etal_2018}.
However, when URLLC or mMTC traffic types are characterized by short
and bursty transmissions at random time instants, resources allocated
statically to these services are likely to be unused for most of time,
and thus wasted. A more efficient use of radio resources can
be accomplished by Non-Orthogonal Multiple Access (or NOMA), which
allows multiple services to share the same physical resources. 

By enabling an opportunistic shared use of the radio resources, NOMA
can provide significant benefits in terms of spectrum efficiency,
but it also poses the challenge of designing the system so that the
heterogeneous requirements of the services are satisfied despite the
mutual interference. The objective of this paper is to address this
issue by considering a Cloud-RAN (C-RAN) architecture characterized
by analog fronthaul links, referred to as Analog Radio-over-X, which
is introduced in the next section.

\subsection{C-RAN Based on Analog Radio-over-X Fronthauling }

The advent of 5G is introducing advanced physical layer technologies
and network deployment strategies such as massive MIMO, mmWave,
small-cell densification, mobile edge computing, \linebreak etc. (see \cite{Boccardi-Heath-etal_2014,wong_schober_ng_wang_2017} for an overview). C-RAN is
an enabling technology that is based on the colocation of the Base
Band Unit (BBU) of Edge Nodes (ENs) that are densely distributed in
a given indoor or outdoor area. This solution has the advantages of
allowing for centralized BBU signal processing, providing network
scalability, increasing spectral efficiency, and reducing costs.

The most typical C-RAN architecture relies on digital optical fronthaul
links to connect ENs to BBUs. This solution, known as Digital Radio-over-Fiber
(D-RoF), is adopted in current 4G mobile networks, and is based on
the transmission of in-phase and quadrature baseband signals, upon
digitization and packetization according to the CPRI protocol \cite{CPRI:specs}. 

Over the last years, several alternative C-RAN architectures have
been proposed that redistribute the RAN functionalities between BBU
and ENs, obtaining different trade-offs in terms of bandwidth and
latency requirements, advanced Cooperative Multi-Point processing
capabilities, and EN cost and complexity \cite{Bartelt:FunctSplit}.
For scenarios with stringent cost and latency constraints, a promising
solution is to use analog fronthauling.

With analog fronthauling, focusing on the uplink, the ENs directly
relay the radio signals to the BBUs after frequency translation and,
possibly, signal amplification. This has the advantages of avoiding
any bandwidth expansion due to digitization; guaranteeing ENs synchronization;
minimizing latency; reducing hardware cost; and improving energy efficiency
\cite{wake:RoFdesign,gambini2010radio,Gambini-Spagnolini_2013}. A
C-RAN architecture based on analog fronthauling is also known in the
literature as Analog Radio-over-X (A-RoX), where X depends on the
technology employed for the fronthaul, which can be either Fiber (A-RoF
\cite{wake:RoFdesign}), Radio (A-RoR \cite{bartelt2013radio}), or
Copper (A-RoC \cite{gambini2010radio}), as depicted in Figure \ref{fig:C-RAN-architecture-overview:}.
\begin{figure}
\begin{centering}
\includegraphics[width=0.5\paperwidth]{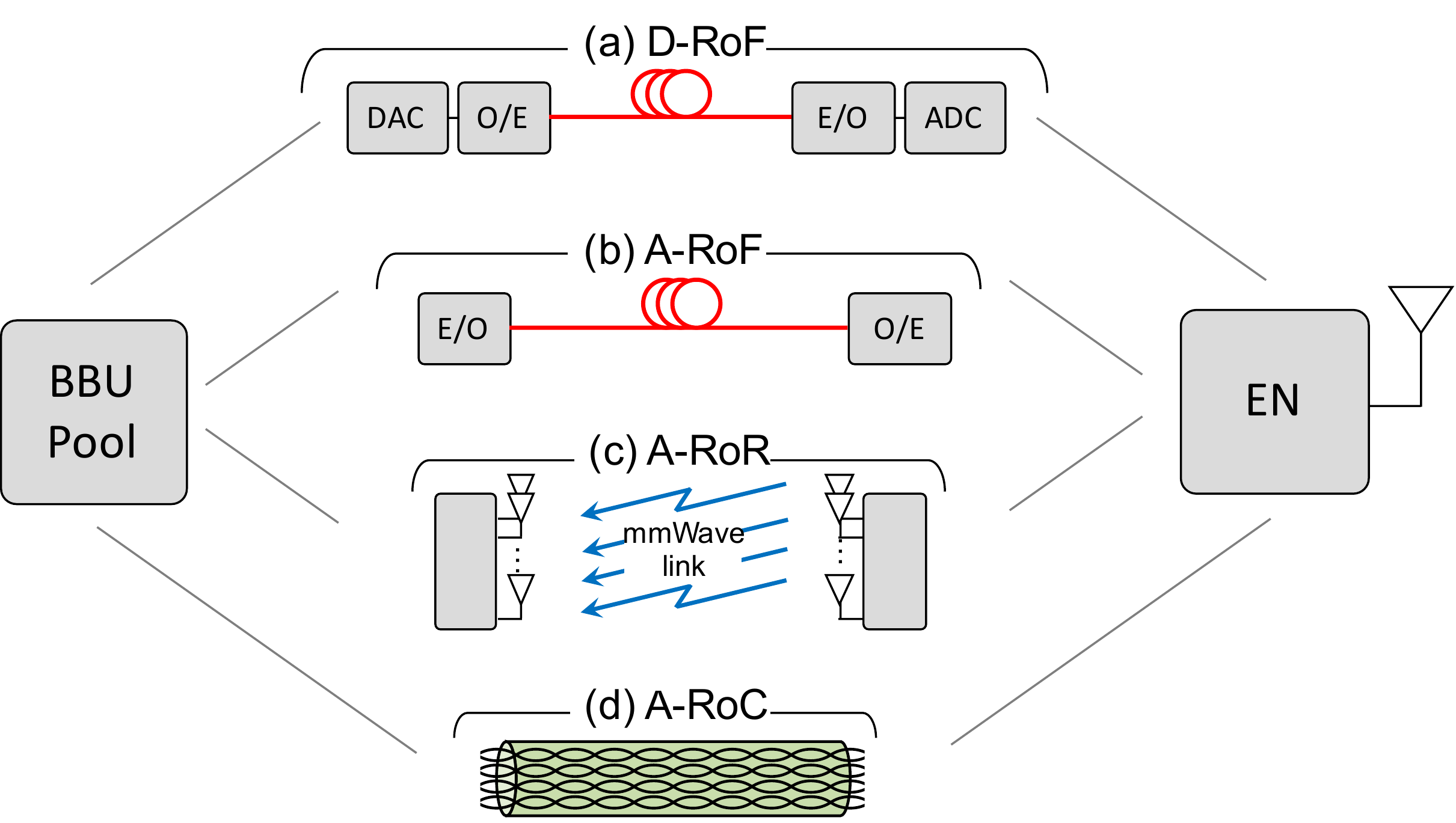}
\par\end{centering}
\caption{\label{fig:C-RAN-architecture-overview:}C-RAN architecture overview
for uplink direction: ({\bf a}) Digital Radio-over-Fiber, ({\bf b}) Analog Radio-over-Fiber,
({\bf c}) Analog Radio-over-Radio, ({\bf d}) Analog Radio-over-Copper.}
\end{figure}

In particular, A-RoF provides an effective example of analog fronthauling,
due to its capability to support the transport of large bandwidths
\cite{wake:RoFdesign,PWMglobecom}. However, A-RoF requires the deployment
of a fiber optic infrastructure whose installation is not always feasible,
e.g., in dense urban areas. In such scenarios, a possible solution
is to rely on the A-RoR concept, thus employing point-to-point wireless
links, mainly based on mmWave or THz bands, with several advantages
in terms of flexibility, resiliency, hardware complexity and cost
\cite{bartelt2013radio,Pham-Atsushi-etal_2016}. Another application
scenario where the installation of fiber links may be too expensive
to provide satisfactory business cases is indoor coverage. For indoor
deployments, A-RoC \cite{matera2017optimal,Matera-Combi-etal_2017,Matera-Spagnolini_2018}
has been recently proved to be an attractive solution, especially
from the deployment costs perspective \cite{Tonini-Fiorani-etal_2017},
since it leverages the pre-existing Local Area Network (LAN) cabling
infrastructure of building and enterprises. Moreover, LAN cables are
equipped with four twisted-pairs with a transport capability up to
500 MHz each, or 2 GHz overall, for radio signals, thus providing
enough bandwidth for analog fronthaul applications \cite{Naqvi-Matera-etal_2017}. Over the last years, A-RoC based on LAN cables has become a standard solution for in-building commercial C-RAN deployments, allowing to extend the indoor coverage over distances longer than 100 m \cite{RadioDots2014}.

In this paper we study the coexistence of URLLC and eMBB services
under both OMA and NOMA assuming a C-RAN multi-cell architecture based
on analog fronthauling (see Figure \ref{fig:C-RAN-architecture-overview:})
by using information theoretical tools.

\subsection{Related Works}

The concept of NOMA is well-known
from the information theoretic literature \cite{cover2012elements},
and its application to 5G dates back to \cite{Saito-Kishiyama-etal_2013},
where authors demonstrated for a single-cell scenario that superimposing
multiple users in the same resources achieves superior performance
with respect to conventional LTE networks, provided that the resulting
interference is properly taken care of. The extension of NOMA to multi-cell
networks is presented in \cite{Shin-Mojtaba-etal_2017}, which addresses
several multi-cell NOMA challenges, including coordinated scheduling,
beamforming, and practical implementation issues related to successive
interference cancellation. 

In contrast with most of the works on NOMA, which deal with homogeneous
traffic conditions (see \cite{Ding-Lei-etal_2017} for a recent review),
here the focus is on NOMA techniques in the context of heterogeneous
networks, such as the forthcoming 5G wireless systems, as discussed
in \cite{Anand_etal_2017,popovski20185g,vaezi2018multiple}. In fact, NOMA represents
an attractive solution to meet the distinct requirements of 5G services,
as it improves spectral efficiency (eMBB), enables massive device
connectivity (mMTC), and allows for low-transmission latency (URLLC)
\cite{Dai-Wang-etal_2015}. 

In \cite{popovski20185g} a communication-theoretic model was introduced
to investigate the performance trade-offs for eMBB, mMTC and URLLC
services in a single-cell scenario under both OMA and NOMA. This single-cell
model has been later extended in \cite{Kassab} to the uplink of a
multi-cell C-RAN architecture, in which the BBU communicates with
multiple URLLC and eMBB users belonging to different cells through
geographically distributed ENs. In the C-RAN system studied in \cite{Kassab},
while the URLLC signals are locally decoded at the ENs due to latency
constraints, the eMBB signals are quantized and forwarded over limited-capacity
digital fronthaul links to the BBU, where centralized joint decoding
\linebreak is performed. 

None of the aforementioned works considers the coexistence of different
5G services in a C-RAN architecture based on analog fronthaul links
which is the focus of this paper.

\subsection{Contributions}

In this paper we study for the first time the coexistence between
URLLC and eMBB services in the uplink of a C-RAN system with analog
fronthauling in which the URLLC signals are still decoded locally
at the EN, while the eMBB signals are forwarded to the BBU over analog
fronthaul links. 

In particular, the main contributions of this paper are three-fold: 
\begin{itemize}
\item We extend the uplink C-RAN theoretic model proposed in \cite{Kassab}
to the case of analog fronthaul, assuming that the fronthaul links
are characterized by multiple, generally interfering, channels that
carry the received radio signals;
\item By leveraging information theoretical tools, we investigate the performance
trade-offs between URLLC and eMBB services under both OMA and NOMA,
by considering different interference management strategies such as
puncturing, considered for the standardization
of 5G New Radio \cite{qualcomm_puncturing,3GPP_FinalReport_Puncturing},
Treating Interference as Noise (TIN), and Successive Interference
Cancellation~(SIC);
\item The analysis demonstrates that NOMA allows for higher eMBB information
rates with respect to OMA, while guaranteeing a reliable low-rate
URLLC communication with minimal access latency. Moreover, differently
from the case of conventional digital C-RAN architecture based on
limited-capacity fronthaul links \cite{Kassab}, in analog C-RAN,
TIN always outperforms puncturing, while the best performance is still
achieved by NOMA with SIC.
\end{itemize}

\subsection{Organization}

The remainder of the paper is organized as follows. The considered
system model is introduced in Section \ref{sec:System-Model}. Section \ref{sec:Cable-Fronthaul-Signal}
details the fronthaul signal processing techniques employed to cope
with the impairments of the A-RoC fronthaul links. The eMBB rand URLLC
information rates are discussed in Section \ref{sec:Orthogonal-Multiple-Access}
and Section \ref{sec:Non-Orthogonal-Multiple-Access} for OMA and NOMA,
respectively. Numerical results are presented in Section \ref{sec:Numerical-Results},
and Section \ref{sec:Conclusions} concludes the paper.

\subsection{Notation\label{subsec:Notation}}

Bold upper- and lower-case letters describe matrices and column vectors, respectively.
Letters $\mathbb{R}$, and $\mathbb{C}$ refer to real and complex
numbers, respectively. We denote matrix inversion, transposition and
conjugate transposition as $\left(\cdot\right)^{-1},\left(\cdot\right)^{T},\left(\cdot\right)^{H}$.
Matrix $\mathbf{I}_{n}$ is an identity matrix of size $n$ and $\mathbf{1}_{n}$
is a column vector made by $n$ ``1s''. Symbol $\otimes$ denotes
the Kronecker operator, $\text{vec}\cdot)$ is the vectorization
operator, and $\text{\ensuremath{\mathbb{E}}}[\cdot]$ is the statistical
expectation. Notation $\text{diag}(A_{1,}A_{2},...A_{n})$ denotes
a diagonal matrix with elements $A_{1,}A_{2},...A_{n}$ on the main
diagonal. The Q-function $\mathcal{Q}(\cdot)$ is the complementary
cumulative distribution function of the standardized normal random
variable, and $\mathcal{Q}^{-1}(\cdot)$ is its inverse.

\section{System Model\label{sec:System-Model}}

The C-RAN architecture under study is illustrated in Figure \ref{fig:Uplink-RoC-based}.
In this system, the BBUs communicate with multiple user equipments
(UEs) belonging to $M$ cells through $M$ single-antenna Edge Nodes
(ENs). The BBUs are co-located in the so-called BBU pool so that joint
processing can be performed, while the ENs are geographically distributed.
In particular, we assume here that cells are arranged in a line following
the conventional circulant Wyner model (\cite{simeone2012cooperative}, Chapter 2),
and each cell contains two single-antenna UEs with different service
constraints: one eMBB user and one URLLC user. 
\begin{figure}
\begin{centering}
\includegraphics[width=0.5\paperwidth]{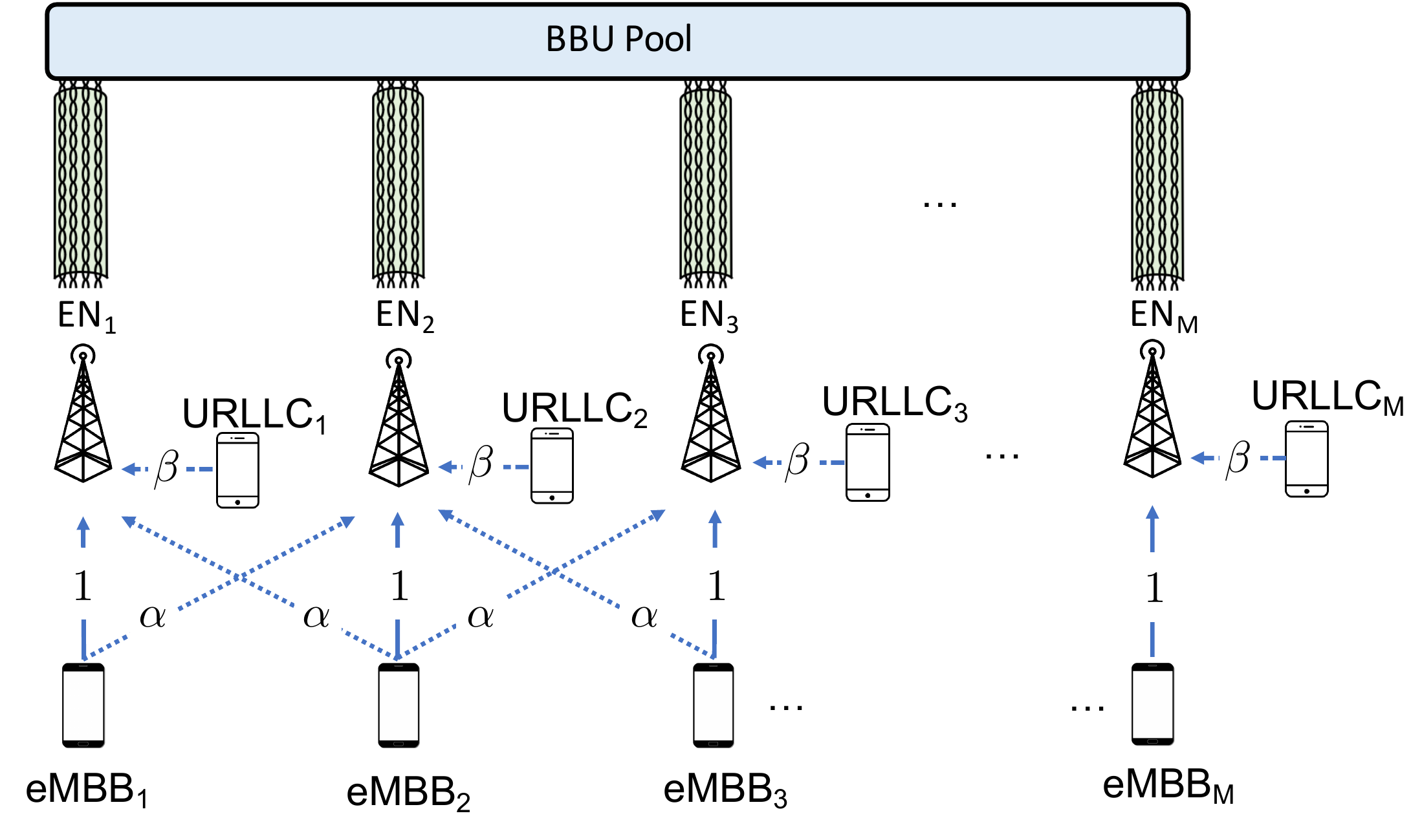}
\par\end{centering}
\caption{\label{fig:Uplink-RoC-based}Model of the uplink of C-RAN system based
on Analog Radio-over-Copper (A-RoC) fronthauling.}
\end{figure}

Due to the strict latency constraints of URLLC traffic, the signal
for the URLLC UEs is decoded on-site at the EN, while the eMBB signals
are forwarded to the BBU through a multi-channel analog fronthaul. In this hybrid cloud-edge architecture, the mobile operator equips the EN with edge computing capabilities in order to provision the services required by the URLLC user directly from the EN.
Following the A-RoX concept, the end-to-end channel from the eMBB
UEs and the BBU pool is assumed to be fully analog: the EN performs
only signal amplification and frequency translation to comply with
fronthaul capabilities and forwards the signals to the BBU, where
centralized decoding is performed. In practice, as detailed later in the paper (see Section \ref{sec:Orthogonal-Multiple-Access} and Section \ref{sec:Non-Orthogonal-Multiple-Access}), we assume that each EN hosts a digital module, responsible for URLLC signal decoding, and an analog module, responsible for the mapping of received radio signal over the analog fronthauling, which is identified as Analog-to-Analog (A/A) mapping.

While the technology used for the analog fronthauling can be either
fiber-optics (A-RoF), wireless (A-RoR) or cable (A-RoC), the system
model proposed in this paper reflects mainly the last two solutions.
Furthermore, we will focus on A-RoC, and we will
use the corresponding terminology to fix the ideas (see Figure
\ref{fig:Uplink-RoC-based}).

\subsection{RAN Model\label{subsec:RAN-Model}}
We consider the same Wyner-type radio access model of \cite{Kassab}, which is
described in this subsection. The Wyner model is an abstraction of cellular systems that captures one of the main aspects of such settings, namely the locality of inter-cell interference. The advantage of employing such a simple model is the possibility to obtain analytical insights, which is a first mandatory step for the performance assessment under more realistic operating conditions \cite{simeone2012cooperative}. As illustrated in Figure \ref{fig:Time-frequency-resource-allocati}, 
the direct channel gain from the eMBB UE and the EN belonging to the
same cell is set to one, while the inter-cell eMBB channel gain is
equal to $\alpha\in[0,1]$. Furthermore, the URLLC UEs have a channel
gain equal to $\beta>0$. The URLLC user is assumed to be in the proximity
of the EN, and thus it does not interfere with the neighboring cells.
The eMBB user, instead, is assumed to be located at the edges of the
cell in order to consider worst-case performance guarantees. As a
result, each eMBB user interferes with both left and right neighboring
cells, following the standard Wyner model \cite{simeone2012cooperative}. All channel gains are assumed
to be constant over the considered radio resources shown Figure \ref{fig:Time-frequency-resource-allocati},
and known to all UEs and ENs.
\begin{figure}
\begin{centering}
\includegraphics[width=0.5\columnwidth]{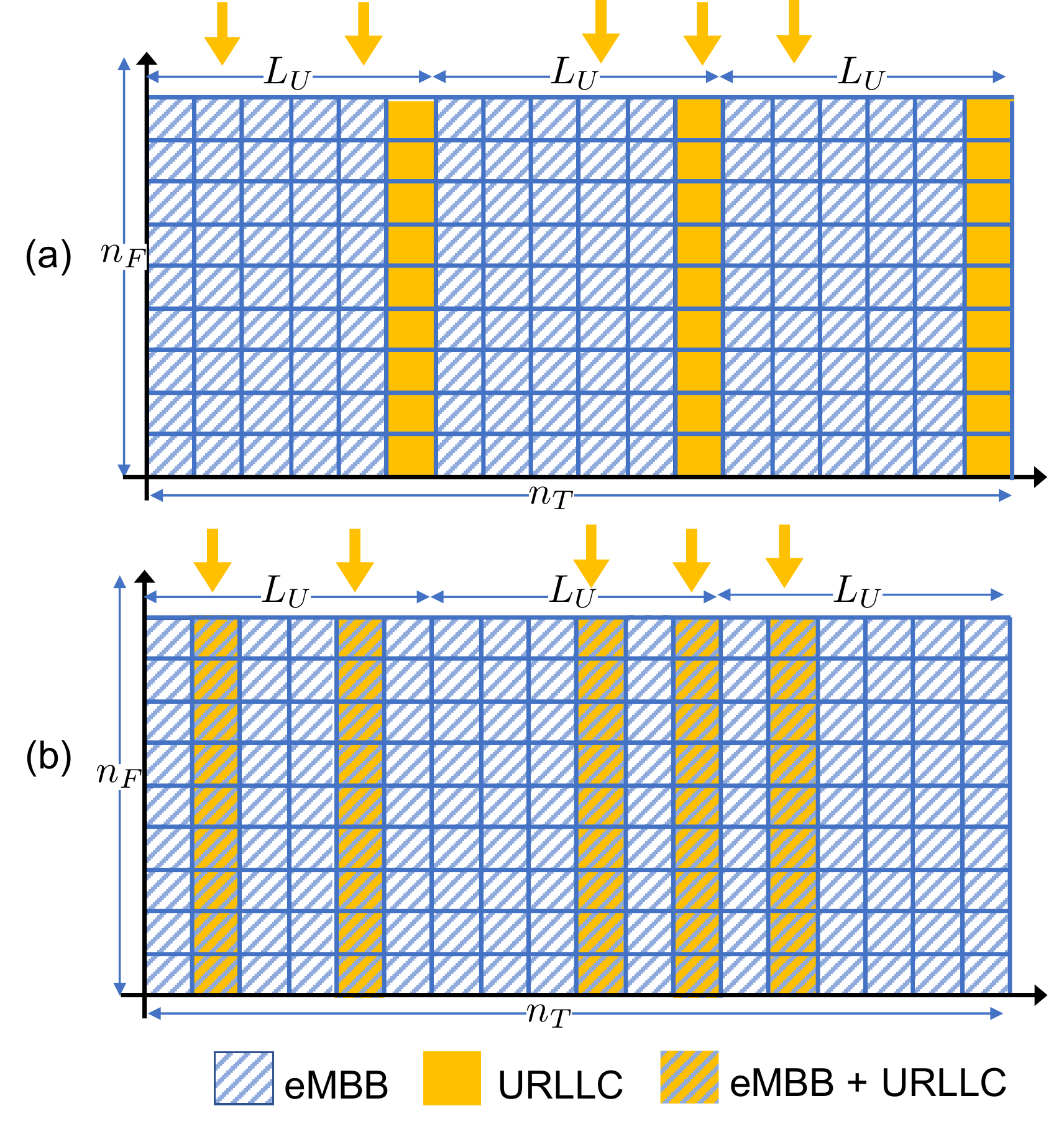}
\par\end{centering}
\caption{\label{fig:Time-frequency-resource-allocati}Time-frequency resource
allocation: ({\bf a}) Orthogonal Multiple Access (OMA) and ({\bf b}) Non-Orthogonal
Multiple Access (NOMA). Downwards arrows denote arrival of URLLC packets.}
\end{figure}

As illustrated in Figure \ref{fig:Time-frequency-resource-allocati},
we assume that the time-frequency plane is divided in $n_{T}$ minislots,
indexed as $t\in[1,n_{T}]$, where each minislot is composed of $n_{F}$
frequency channels, indexed as $f\in[1,n_{F}]$, for a total of $n_{F}n_{T}$
time-frequency radio resources. Each radio resource accommodates the
transmission of a single symbol, although generalizations are straightforward.
The eMBB UEs transmit over the entire $n_{F}\times n_{T}$ time-frequency
frame. In contrast, due to the latency constraints of the URLLC traffic,
each URLLC transmission is limited to the $n_{F}$ frequency channels
of a single minislot, and URLLC packets are generally small compared
to the eMBB frame, which requires the condition $n_{T}\gg1$. As illustrated
in Figure \ref{fig:Time-frequency-resource-allocati}, each URLLC UE
generates an independent packet in each minislot with probability
$q$. This packet is transmitted at the next available transmission
opportunity in a grant-free manner.

In the case of OMA, one minislot is exclusively
allocated for URLLC transmission every $L_{U}$ minislots. Parameter
$L_{U}$ is considered here as the worst-case access latency. Accordingly,
if more than one packet is generated within the $L_{U}$ minislots
between two transmission opportunities, only one of those packets is
randomly selected for transmission and all the remaining are discarded.
The signal $Y_{k}^{f}(t)$ received at the $k$-th EN at the $f$-th
frequency under OMA is

\begin{equation}
Y_{k}^{f}(t)=\begin{cases}
\beta A_{k}(t)U_{k}^{f}(t)+Z_{k}^{f}(t), & \text{if}\,t=L_{U},~2L_{U},~...\\
X_{k}^{f}(t)+\alpha X_{k-1}^{f}(t)+\alpha X_{k+1}^{f}(t)+Z_{k}^{f}(t), & \text{otherwise}
\end{cases}\label{eq:OMA received signal}
\end{equation}
where $X_{k}^{f}(t)$ and $U_{k}^{f}(t)$ are the signals transmitted
at time $t$ and subcarrier $f$ by the $k$-th eMBB UE and URLLC
UE, respectively; $Z_{k}^{f}(t)\sim\mathcal{CN}(0,1)$ is the unit-power
zero-mean additive white Gaussian noise; and $A_{k}(t)\in\{0,1\}$
is a binary variable indicating whether or not the URLLC UE is transmitting
at time $t$. 

In case of NOMA, the URLLC UE transmits
its packet in the same slot where it is generated by the application
layer, so that the access latency is always minimal, i.e., $L_{U}=1$
minislot. Under NOMA, the signal $Y_{k}^{f}(t)$ received at the $k$-th
EN at the $f$-th frequency is
\begin{equation}
Y_{k}^{f}(t)=X_{k}^{f}(t)+\alpha X_{k-1}^{f}(t)+\alpha X_{k+1}^{f}(t)+\beta A_{k}(t)U_{k}^{f}(t)+Z_{k}^{f}(t).\label{eq:NOMA received signal}
\end{equation}
According to the circulant Wyner model, in (\ref{eq:OMA received signal}) and (\ref{eq:NOMA received signal}), we assume that $[k-1]=M$ for $k=1$ and $[k+1]=1$ for $k=M$, in order to guarantee symmetry.

For both OMA and NOMA, the power constraints for the $k$-th eMBB
and URLLC users are defined within each radio resource frame respectively
as 
\begin{equation}
\frac{1}{n_{F}n_{T}}\sum_{t=1}^{n_{T}}\sum_{f=1}^{n_{F}}\mathbb{E}\left[\left|X_{k}^{f}(t)\right|^{2}\right]\leq P_{B},
\end{equation}
and
\begin{equation}
\frac{1}{n_{F}}\sum_{f=1}^{n_{F}}\mathbb{E}\left[\left|U_{k}^{f}(t)\right|^{2}\right]\leq P_{U},
\end{equation}
where the temporal average are taken over all symbols within a codeword.

Models (\ref{eq:OMA received signal}) and (\ref{eq:NOMA received signal})
can be written in matrix form as
\begin{equation}
\mathbf{Y}(t)=\mathbf{X}(t)\mathbf{H}+\beta\mathbf{U}(t)\mathbf{A}(t)+\mathbf{Z}(t),\label{eq:-14}
\end{equation}
{where matrix $\mathbf{Y}(t)=[\mathbf{y}_{1}(t),~\mathbf{y}_{2}(t),~...,~\mathbf{y}_{M}(t)]\in\mathbb{C}^{n_{F}\times M}$
~gathers all the signals received at all the $M$ ENs over all the
$n_{F}$ frequencies, and the $k$-th column $\mathbf{y}_{k}(t)\in\mathbb{C}^{n_{F}\times1}$
denotes the signal received across all the radio frequencies at the
$k$-th EN. The channel matrix $\mathbf{H}\in\mathbb{R}^{M\times M}$
is circulant with the first column given by vector $[1,~\alpha,~0,~...,~0,~\alpha]^{T}$;
matrices $\mathbf{U}(t)\in\mathbb{C}^{n_{F}\times M}$ and $\mathbf{X}(t)\in\mathbb{C}^{n_{F}\times M}$
collect the signals transmitted by URLLC and eMBB UEs, respectively;
and $\mathbf{Z}(t)\in\mathbb{C}^{n_{F}\times M}$ is the overall noise
matrix. Finally, $\mathbf{A}(t)$ is a diagonal matrix whose $k$-th
diagonal element is a Bernoulli random variable distributed as $A_{k}(t)\sim\mathcal{B}(q),\,\forall k=1,~2,~...,~M$. 

\subsection{Space-Frequency Analog Fronthaul Channel}

In the considered analog fronthaul architecture, the $k$-th EN forwards
the signal $\mathbf{y}_{k}(t)$ received by the UEs to the BBU over
a wired-access link in a fully analog fashion. As depicted in Figure
\ref{fig:Uplink-RoC-based}, we focus our attention on a multichannel
link that is possibly affected by inter-channel interference. While
the model considered here can apply also to wireless multichannel
links, as in Figure \ref{fig:C-RAN-architecture-overview:}, we adopt
here the terminology of Analog Radio-over-Copper (A-RoC) as an important
example in which A-RoX is affected by fronthauling inter-link interference.
Accordingly, each of the cables employed for the fronthaul contains
$l_{S}$ twisted-pairs, i.e., $l_{S}$ space-separated channels, indexed
as $c\in[1,l_{S}]$. Each pair carries a bandwidth equal to $l_{F}\leq n_{F}$
frequency channels of the RAN, indexed as $f'\in[1,l_{F}]$, so that
a total of $l_{S}l_{F}$ space-frequency resource blocks are available
over each cable, as shown in Figure \ref{fig:Space-frequency-cable-resource}.
Furthermore, we assume that each analog fronthaul link has enough
resources to accommodate the transmission of the whole radio signal
at each EN, i.e., $l_{S}l_{F}\geq n_{F}.$
\begin{figure}
\begin{centering}
\includegraphics[width=0.4\columnwidth]{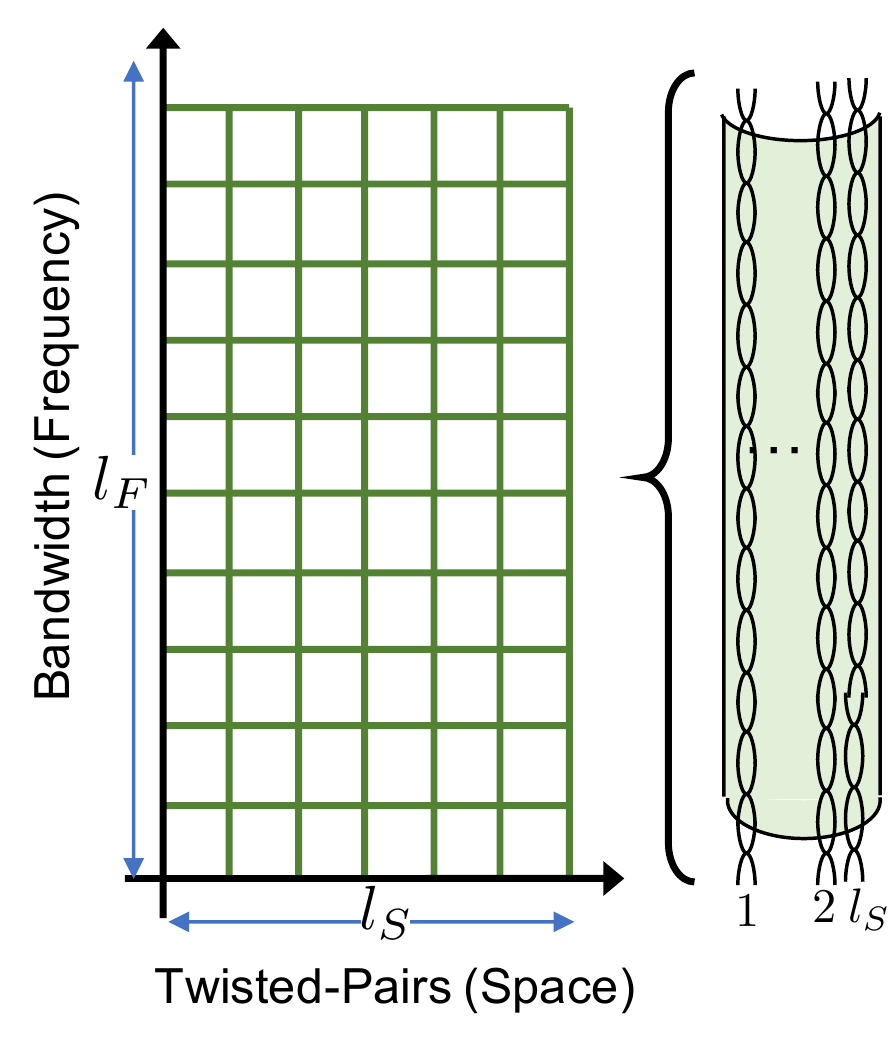}
\par\end{centering}
\caption{\label{fig:Space-frequency-cable-resource}Space-frequency cable resource
allocation}
\end{figure}
The fronthaul channel between each EN and the BBU is described by
the matrix $\mathbf{H}_{c}\in\mathbb{R}^{l_{S}\times l_{S}}$, which
accounts for direct channel gains on each cable, given by the diagonal
elements $[\mathbf{H}_{c}]_{ii}$, and for the intra-cable crosstalk,
described by the off-diagonal elements $[\mathbf{H}_{c}]_{ij}$, with
$i\neq j$. We assume here that the channel coefficients in $\mathbf{H}_{c}$
do not depend on frequency $f'$. Furthermore, in keeping the modeling
assumptions of the Wyner model, we posit that the direct channel gains
for all the pairs are normalized to 1, while the crosstalk coefficients
between any pair of twisted-pairs are given by a coupling parameter
$\gamma\geq0$. It follows that the fronthaul channel matrix can be
written as
\begin{equation}
\mathbf{H}_{c}=\gamma\mathbf{1}_{l_{S}}\mathbf{1}_{l_{S}}^{T}+(1-\gamma)\mathbf{I}_{l_{S}},\label{eq:-21}
\end{equation}
where $\mathbf{1}_{n}$ denotes a column vector of size $n$ of all
ones and $\mathbf{I}_{n}$ is the identity matrix of size $n$. We
note that, in case of wireless fronthaul links such as in A-RoR, the
coefficient $\gamma$ accounts for the mutual interference between
spatially separated radio links. As a result, one typically has $\gamma>0$
when considering sub-6 GHz frequency bands, while the condition $\gamma=0$
may be reasonable in the mmWave or THz bands, in which communication
is mainly noise-limited due to the highly directive beams \cite{Rappaport-etal_2013}.

For a given time $t$, the symbols $\mathbf{y}_{k}(t)$ received at
EN $k$-th over all the $n_{F}$ radio frequency channels are transported
to the BBU over the $l_{S}l_{F}$ cable resource blocks, where the
mapping between radio and cable resources is to be designed (see Section
\ref{subsec:Radio-Resource-Mapping}) and depends on the bandwidth
$l_{F}$ available at each twisted-pair. 

In this regards, we define the fraction $\mu\in[1/l_{S},1]$ of the
radio bandwidth $n_{F}$ that can be carried by each pair, referred
to as \textit{normalized cable bandwidth}, as
\begin{equation}
\mu=\frac{l_{F}}{n_{F}}.
\end{equation}
As a result, the quantity 
\begin{equation}
\eta=\mu\cdot l_{S}\label{eq:-37}
\end{equation}
 expresses the bandwidth amplification factor (or redundancy) over
the cable fronthaul, as $\eta\geq1$. To simplify, we assume here
that both $1/\mu$ and $\eta$ are integer numbers. The two extreme
situations with $\mu=1$, or $\eta=l_{S}$, and $\mu=1/l_{S}$, or
$\eta=1$, are shown in Figure \ref{fig:Mapping-of-radio} for $l_{S}=4$
twisted-pairs and $n_{F}=8$ subcarriers. For the first case, one
replica of the whole radio signal $\mathbf{y}_{k}(t)$ can be transmitted
over all of the $l_{S}$ pairs, and the bandwidth amplification over
cable is $\eta=l_{S}$. In the second case, disjoint fractions of
the received bandwidth can be forwarded on each pair and the bandwidth
amplification factor is $\eta=1$.
\begin{figure}
\begin{centering}
\includegraphics[width=0.7\textwidth]{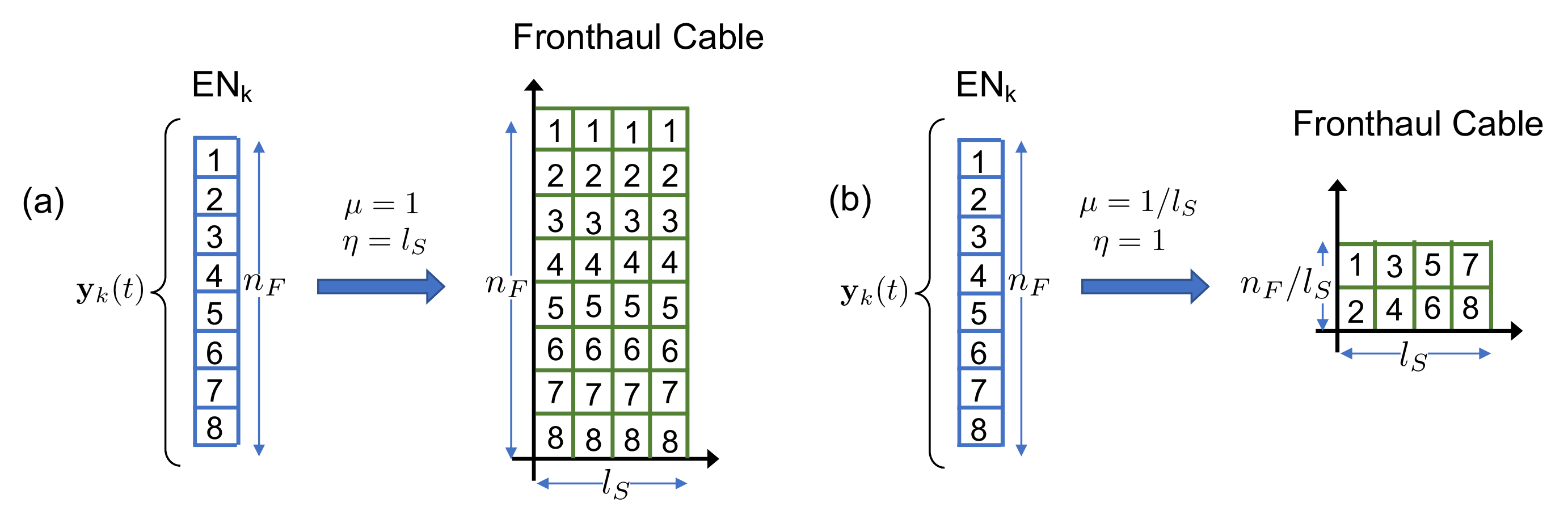}
\par\end{centering}
\caption{\label{fig:Mapping-of-radio} Mapping of radio resources over cable
resources: ({\bf a}) maximum normalized cable bandwidth (full redundancy), $\mu=1$, or $\eta=l_S$; ({\bf b}) minimal normalized cable bandwidth (no redundancy), $\mu=1/l_S$, or $\eta=1$.}

\end{figure}
We now detail the signal model for fronthaul transmission. To this
end, let us define the $l_{F}\times l_{S}$ matrix $\mathbf{\tilde{Y}}_{k}$
containing the signal to be transmitted by the EN $k$-th to the BBU
over the copper cable as 
\begin{equation}
\mathbf{\tilde{Y}}_{k}=[\tilde{\mathbf{y}}_{k}^{1}(t),\tilde{\mathbf{y}}_{k}^{2}(t),...,\tilde{\mathbf{y}}_{k}^{l_{S}}(t)],\label{eq:-4}
\end{equation}
where the $k$-th column $\tilde{\mathbf{y}}_{k}^{c}(t)\in\mathbb{C}^{l_{F}\times1}$
denotes the signal transmitted on twisted-pair $c$ across all the
$l_{F}$ cable frequency resources. The signal $\mathbf{\tilde{R}}_{k}\in\mathbb{C}^{l_{F}\times l_{S}}$
received at the BBU from the $k$-th EN across all the cable space-frequency
resources is then computed as
\begin{equation}
\tilde{\mathbf{R}}_{k}(t)=\tilde{\mathbf{Y}}_{k}(t)\mathbf{H}_{c}+\tilde{\mathbf{W}}_{k}(t),\label{eq:-1}
\end{equation}
where $\tilde{\mathbf{W}}_{k}(t)=[\tilde{\mathbf{w}}_{k}^{1}(t),\tilde{\mathbf{w}}_{k}^{2}(t),...,\tilde{\mathbf{w}}_{k}^{l_{S}}(t)]\in\mathbb{C}^{l_{F}\times l_{S}}$
is the additive white Gaussian cable noise uncorrelated over cable
pairs and frequencies, i.e., $\tilde{\mathbf{w}}_{k}^{c}(t)\sim\mathcal{CN}(\mathbf{0},\mathbf{I}_{l_{F}})$
for all pairs $c=1,~2,~...,~l_{S}$. 

As commonly assumed in wireline communications to control cable radiations
\cite{Hekrdla-Matera-etal_2015}, the power of the cable symbol $\left[\tilde{\mathbf{y}}_{k}^{c}(t)\right]_{f^{'}}$
transmitted from EN$_{k}$ over twisted-pair $c$ at frequency $f^{'}$
is constrained to $P_{c}$ by the (short-term: One can also consider the ``long-term'' power constraint $n_{T}^{-1}\sum_{t}\mathbb{E}[|[\tilde{\mathbf{y}}_{k}^{c}(t)]_{f^{'}}|^{2}]\leq~P_{c},\,\forall c\in[1,l_{S}],\thinspace f'\in[1,l_{F}]$
with minor modifications to the analysis and final results. ) power constraint 
\begin{equation}
\mathbb{E}\left[\left|\left[\tilde{\mathbf{y}}_{k}^{c}(t)\right]_{f^{'}}\right|^{2}\right]\leq P_{c}\quad\forall c\in[1,l_{S}],f'\in[1,l_{F}],t\in[1,n_{T}].\label{eq:-3}
\end{equation}

In the following, we will omit the time index $t$, when no confusion
arises.

\subsection{Performance Metrics}

The performance metrics used to evaluate the interaction between eMBB
and URLLC services in the considered A-RoC-based C-RAN architecture
are detailed in the following. 

\subsubsection{eMBB }

Capacity enhancement is the main goal of the eMBB service, which is
envisioned to provide very high-rate communication to all the UEs.
Therefore, for eMBB UEs, we are interested in the per-UE rate defined
as
\begin{equation}
R_{B}=\frac{\text{log}_{2}(M_{B})}{n_{T}n_{F}},\label{eq:-31}
\end{equation}
where $M_{B}$ is the number of codewords in the codebook of each
eMBB UE. 

\subsubsection{URLLC}

Differently from eMBB, URLLC service is mainly focused on low-latency
and reliability aspects. Due to the short length of URLLC packets,
in order to guarantee ultra reliable communications, we need to ensure
that the error probability for each URLLC UE, denoted as $\text{Pr}[E_{U}]$,
is bounded by a predefined value $\epsilon_{U}$ (typically smaller
than $10^{-3}$) as 
\begin{equation}
\text{Pr}[E_{U}]\leq\epsilon_{U}.\label{eq:-33}
\end{equation}
Concerning latency, we define the maximum access latency $L_{U}$
as the maximum number of minislots that an URLLC UE has to wait before
transmitting a packet. Finally, although rate enhancement is not one
of the goals of URLLC service, it is still important to evaluate the
per-UE URLLC rate that can be guaranteed while satisfying the aforementioned
latency and reliability constraints. Similarly to (\ref{eq:-31}),
the per-UE URLLC rate is defined as 
\begin{equation}
R_{U}=\frac{\text{log}_{2}(M_{U})}{n_{F}},
\end{equation}
where $M_{U}$ is the number of URLLC codewords in the codebook used
by the URLLC UE for each information packet.

\section{Analog Fronthaul Signal Processing\label{sec:Cable-Fronthaul-Signal}}

The analog fronthaul links employed in the C-RAN system under study
pose several challenges in the system design, which are addressed
in this section. Firstly, the radio signal received at each EN needs
to be mapped over the corresponding fronthaul resources in both frequency
and space dimension. Secondly, the signal at the output of each fronthaul
link needs to be processed in order to maximize the Signal-to-Noise
Ratio (SNR) for all UE signals. Finally, the power constraints in
(\ref{eq:-3}) must be properly enforced. All these requirements are
to be addressed by all-analog processing in order to meet the low-complexity
and latency constraints of the analog fronthaul. In the rest of this
section, we discuss each of these problems in turn.

\subsection{\label{subsec:Radio-Resource-Mapping}Radio Resource Mapping over
Fronthaul Channels}

To maximize the SNRs for all the signals forwarded over the
fronthaul by symmetry, we need to ensure that: \textit{(i) }all the
received signals are replicated $\eta$ times across the cable twisted-pairs,
where we recall that $\eta$ is the bandwidth amplification factor
defined in (\ref{eq:-37}); and \textit{(ii)} cable cross-talk interference
among different radio frequency bands is minimized. In fact, as the
transmitted power at the cable input is limited by the constraints
in (\ref{eq:-3}), a simple and effective way to cope with the impairments of the analog fronthaul 
links using the only analog-processing capability is by introducing redundancy in the fronthaul transmission. 
To this end, without loss of generality, we assume the following mapping rule between the $n_{F}$ radio signals
at each EN and the $l_{S}l_{F}$ cable resources. 

Let us consider the radio signal $\mathbf{y}_{k}=\left[Y_{k}^{1},~Y_{k}^{2},~...,~Y_{k}^{n_{F}}\right]^{T}$
received at the $k$-th EN. For a given normalized cable bandwidth
$\mu$, the $n_{F}$ frequency channels of the radio signal $\mathbf{y}_{k}$
can be split into $1/\mu$ sub-vectors of size $l_{F}=\mu n_{F}$
as
\begin{equation}
\mathbf{y}_{k}=\begin{bmatrix}\mathbf{y}_{k}^{1}\\
\mathbf{y}_{k}^{2}\\
\vdots\\
\mathbf{y}_{k}^{\frac{1}{\mu}}
\end{bmatrix},\label{eq:-5-1}
\end{equation}
where each vector $\mathbf{y}_{k}^{j}$ contains a disjoint fraction
of the radio signal bandwidth $n_{F}$. Each vector $\mathbf{y}_{k}^{j}$
can be transmitted over $\eta$ twisted-pairs, where we recall that
$\eta$ is the fronthaul redundancy factor. To this end, each signal
$\mathbf{y}_{k}^{j}$ in (\ref{eq:}) is transmitted over $\eta$
consecutive cable twisted-pairs.

To formalize the described mapping, the first step consists in reorganizing
the signal $\mathbf{y}_{k}$ into a $l_{F}\times\frac{1}{\mu}$ matrix
as 
\begin{equation}
\mathbf{Y}_{k}=\text{vec}_{\frac{1}{\mu}}^{-1}(\mathbf{y}_{k})=\begin{bmatrix}\mathbf{y}_{k}^{1} & \mathbf{y}_{k}^{2} & ... & \mathbf{y}_{k}^{\frac{1}{\mu}}\end{bmatrix},\label{eq:}
\end{equation}
where the operator $\text{vec}_{\frac{1}{\mu}}^{-1}(\mathbf{\cdot}):\,\mathbb{C}^{n_{F}}\rightarrow\mathbb{C}^{n_{F}\cdot\mu\times\frac{1}{\mu}}$
acts as the inverse of the vectorization operator $\text{vec}(\cdot)$,
with the subindex $1/\mu$ denoting the number of columns of the resulting
matrix. Then, the overall cable signal $\tilde{\mathbf{Y}}_{k}$ transmitted
by EN $k$-th in (\ref{eq:-4}) can be equivalently written as
\begin{equation}
\tilde{\mathbf{Y}}_{k}=\begin{bmatrix}\underset{\eta}{\underbrace{\begin{array}{cccc}
\mathbf{y}_{k}^{1} & \mathbf{y}_{k}^{1} & \dots & \mathbf{y}_{k}^{1}\end{array}}}, & \underset{\eta}{\underbrace{\begin{array}{cccc}
\mathbf{y}_{k}^{2} & \mathbf{y}_{k}^{2} & \dots & \mathbf{y}_{k}^{2}\end{array}}} & ... & \underset{\eta}{\underbrace{\begin{array}{cccc}
\mathbf{y}_{k}^{\frac{1}{\mu}} & \mathbf{y}_{k}^{\frac{1}{\mu}} & \dots & \mathbf{y}_{k}^{\frac{1}{\mu}}\end{array}}}\end{bmatrix},\label{eq:-2}
\end{equation}
or, in compact form, as 
\begin{equation}
\tilde{\mathbf{Y}}_{k}=\mathbf{Y}_{k}\otimes\mathbf{1}_{\eta}^{T},\label{eq:-5}
\end{equation}
where $\otimes$ denotes the Kronecker product (for a review of Kronecker
product properties in signal processing we refer the reader to \cite{spagnolini2018statistical}).
Notice that in case of full normalized cable bandwidth, i.e., $\mu=1$
(corresponding to $\eta=l_{S}$), the signal $\tilde{\mathbf{Y}}_{k}$
transmitted over the fronthaul cable simplifies to $\tilde{\mathbf{Y}}_{k}=\mathbf{y}_{k}\otimes\mathbf{1}_{l_{S}}^{T}$,
which implies that the radio signal $\mathbf{y}_{k}$ is replicated
over all the $l_{S}$ twisted-pairs. On the contrary, when the normalized
cable bandwidth is minimal, i.e., $\mu=1/l_{S}$ (corresponding to
$\eta=1$), the signal $\tilde{\mathbf{Y}}_{k}$ does not contain
any redundancy, and disjoints signals are transmitted over all pairs,
so that the cable signal $\tilde{\mathbf{Y}}_{k}$ equals the matrix
radio signal in (\ref{eq:}) as $\tilde{\mathbf{Y}}_{k}=\mathbf{Y}_{k}$. 

\begin{remark}
The easiest practical implementation of the proposed analog radio resource mapping at the EN is by grouping the subcarriers onto a specific frequency portion of the cable, as described in \cite{matera2017optimal,Matera-Combi-etal_2017,Matera-Spagnolini_2018,Naqvi-Matera-etal_2017}. As an example, let us assume that the EN is equipped with 5 antennas, that each antenna receives a 20-MHz radio signal, and that the analog fronthauling disposes of 4 links with 100 MHz bandwidth each. In this case, the above references have shown that it is possible to freely map, or to replicate, in an all-analog fashion the 
5 $\times$ 20 MHz
~bands onto the overall 4 $\times$ 100 MHz = 400 MHz fronthaul bandwidth. This example corresponds to a special case of the model studied in this paper, obtained by setting $\mu=1$, i.e., the whole radio signal bandwidth received at the ENs is mapped/replicated over the analog fronthauling. More generally, this paper posits the possibility to carry out the fronthaul mapping at a finer granularity, i.e., at a subcarrier level. In this case, filtering operations would in practice be mandatory in order to extract groups of subcarriers. This operation can be implemented in principle still by analog filters, whose design is left as future works.
\end{remark}

\subsection{Signal Combining at the Fronthaul Output }

As discussed, depending on the fronthaul bandwidth, a number $\eta$
of noisy replicas of the radio signals received at each EN are relayed
to the BBU over $\eta$ different twisted-pairs. Hence, in order to
maximize the SNRs for all signals, Maximum Ratio Combining (MRC) \cite{jakes1994microwave}
is applied at the cable output as
\begin{equation}
\mathbf{R}_{k}=\tilde{\mathbf{R}}_{k}\mathbf{G},\label{eq:-27}
\end{equation}
where $\mathbf{R}_{k}\in\mathbb{C}^{l_{F}\times\frac{1}{\mu}}$ is
the signal received at BBU from the $k$-th EN after the combiner
and $\mathbf{G}\in\mathbb{R}^{l_{S}\times\frac{1}{\mu}}$ is the MRC
matrix. Under the assumptions here, MRC coincides with equal ratio
combining and hence matrix $\mathbf{G}$ can be written as 
\begin{equation}
\mathbf{G}=\frac{1}{\eta}\left(\mathbf{I}_{\frac{1}{\mu}}\otimes\mathbf{1}_{\eta}\right).\label{eq:-20}
\end{equation}

As an example, in the case of maximum redundancy, i.e., $\eta=l_{S}$,
the MRC matrix $\mathbf{G}=l_{S}^{-1}\mathbf{1}_{l_{S}}$ combines the analog signals received over all pairs, since they
carry the same information signal. On the contrary, in the case of
minimal normalized bandwidth $\mu=1/l_{S}$ or $\eta=1$, the matrix
$\mathbf{G}$ equals the identity matrix as $\mathbf{G}=\mathbf{I}_{l_{S}}$,
since no combining is possible.

The signal $\mathbf{r}_{k}\in\mathbb{C}^{n_{F}}$ received at the
BBU from EN$_{k}$ across the $n_{F}$ subcarriers is thus obtained
by vectorizing matrix $\mathbf{R}_{k}$ in (\ref{eq:-27}) as 

\begin{equation}
\mathbf{r}_{k}=\text{vec}(\mathbf{R}_{k}).\label{eq:-38}
\end{equation}

The relationship between the signal $\mathbf{r}_{k}$ (\ref{eq:-38})
obtained at the output of the combiner and the radio received signal
$\mathbf{y}_{k}$ in (\ref{eq:-5-1}) is summarized by the block-scheme
in Figure \ref{tab:Tranmsission-over-cable}, and it is
\begin{equation}
\mathbf{r}_{k}=\text{vec}\left[\left(\left(\text{vec}_{\frac{1}{\mu}}^{-1}\left(\mathbf{y}_{k}\right)\otimes\mathbf{1}_{\eta}^{T}\right)\mathbf{H}_{c}+\tilde{\mathbf{W}}_{k}\right)\mathbf{G}_{k}\right].
\end{equation}

\begin{figure}
\begin{centering}
\includegraphics[width=0.9\textwidth]{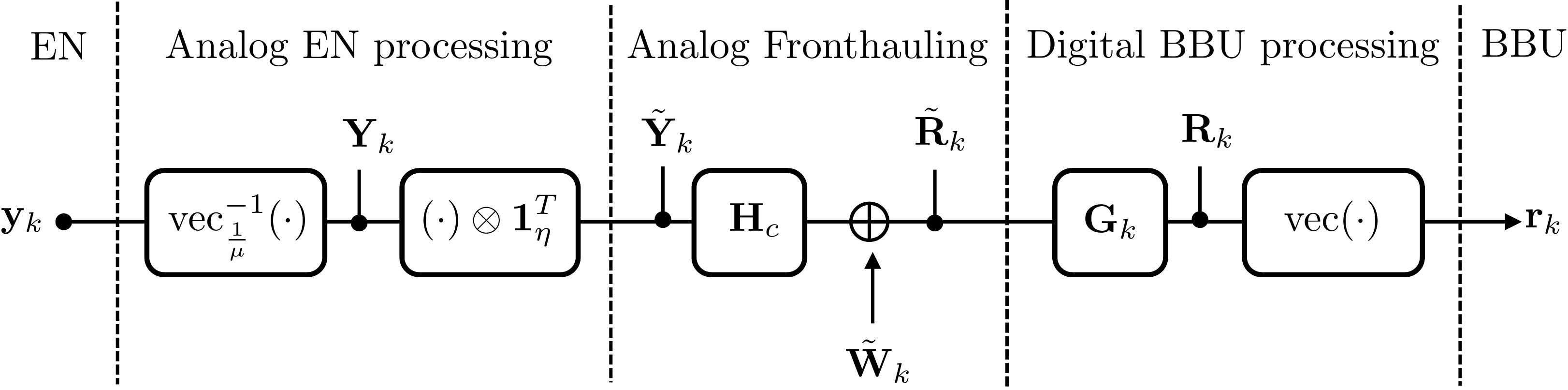}
\par\end{centering}
\caption{\label{tab:Tranmsission-over-cable}Relationship between the signal
$\mathbf{r}_{k}$ (\ref{eq:-38}) obtained at the output of the combiner
and the radio received signal $\mathbf{y}_{k}$ in (\ref{eq:-5-1}).}
\end{figure}
Finally, we collect the overall signal $\mathbf{R}\in\mathbb{C}^{n_{F}\times M}$
received at the BBU from all ENs across all frequencies in the matrix
\begin{equation}
\mathbf{R}=[\mathbf{r}_{1},\mathbf{r}_{2},~...,~\mathbf{r}_{M}].\label{eq:-15}
\end{equation}
After some algebraic manipulations, it is possible to express Equation
(\ref{eq:-15}) in a more compact form, which is reported in the following
Lemma \ref{lem:In-C-RAN-architecture}. 
\begin{lem}
\label{lem:In-C-RAN-architecture}In the given C-RAN architecture
with analog fronthaul links, for a given bandwidth amplification factor
$\eta\geq1$, the signal $\mathbf{R}\in\mathbb{C}^{n_{F}\times M}$
received at the BBU from all ENs across all radio frequencies after
MRC can be written as
\begin{equation}
\mathbf{R}=\left(\mathbf{H}_{c}^{\eta}\otimes\mathbf{I}_{l_{F}}\right)\mathbf{Y}+\mathbf{W},\label{eq:-28}
\end{equation}
where 
\begin{equation}
\mathbf{H}_{c}^{\eta}=\gamma\eta\mathbf{1}_{\frac{1}{\mu}}\mathbf{1}_{\frac{1}{\mu}}^{T}+(1-\gamma)\mathbf{I}_{\frac{1}{\mu}}\label{eq:-6}
\end{equation}
 is the equivalent fronthaul channel matrix; $\mathbf{Y}$ is the
signal received at all ENs across all frequency radio channels in
(\ref{eq:-14}); and $\text{\ensuremath{\mathbf{W}}}=[\mathbf{w}_{1},\mathbf{w}_{2},...,\mathbf{w}_{M}]$
is the equivalent cable noise at the BBU after MRC, with the $k$-th
column distributed as $\mathbf{w}_{k}\sim\mathcal{CN}\left(\mathbf{0},\frac{1}{\eta}\mathbf{I}_{n_{F}}\right)$
for all $k=1,2,...,M$.
\end{lem}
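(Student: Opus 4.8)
The plan is to follow a single edge node's signal $\mathbf{r}_{k}$ through the entire processing cascade---radio-to-cable mapping, transmission over $\mathbf{H}_{c}$, MRC combining, and final vectorization---collapse the resulting chain of Kronecker products into a single linear map, and then assemble the per-EN results columnwise into $\mathbf{R}=[\mathbf{r}_{1},\dots,\mathbf{r}_{M}]$. The two identities doing the heavy lifting will be the mixed-product rule $(\mathbf{A}\otimes\mathbf{B})(\mathbf{C}\otimes\mathbf{D})=(\mathbf{A}\mathbf{C})\otimes(\mathbf{B}\mathbf{D})$ and the vectorization identity $\text{vec}(\mathbf{A}\mathbf{B}\mathbf{C})=(\mathbf{C}^{T}\otimes\mathbf{A})\,\text{vec}(\mathbf{B})$, supplemented by the elementary but pivotal observation that, since $l_{S}=\eta/\mu$, the all-ones vector factors as $\mathbf{1}_{l_{S}}=\mathbf{1}_{1/\mu}\otimes\mathbf{1}_{\eta}$. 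First I would insert the cable model $\tilde{\mathbf{R}}_{k}=\tilde{\mathbf{Y}}_{k}\mathbf{H}_{c}+\tilde{\mathbf{W}}_{k}$ of (\ref{eq:-1}), with $\tilde{\mathbf{Y}}_{k}=\mathbf{Y}_{k}\otimes\mathbf{1}_{\eta}^{T}$ from (\ref{eq:-5}), into the combiner $\mathbf{R}_{k}=\tilde{\mathbf{R}}_{k}\mathbf{G}$ of (\ref{eq:-27}) and split the product linearly into a signal part $\tilde{\mathbf{Y}}_{k}\mathbf{H}_{c}\mathbf{G}$ and a noise part $\tilde{\mathbf{W}}_{k}\mathbf{G}$.

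The key algebraic step, which I expect to carry the weight of the proof, is simplifying $\mathbf{H}_{c}\mathbf{G}$ with $\mathbf{G}=\eta^{-1}(\mathbf{I}_{1/\mu}\otimes\mathbf{1}_{\eta})$ from (\ref{eq:-20}) and $\mathbf{H}_{c}$ from (\ref{eq:-21}). Using the factorization of $\mathbf{1}_{l_{S}}$ one shows $\mathbf{1}_{l_{S}}^{T}\mathbf{G}=\mathbf{1}_{1/\mu}^{T}$, where the scalar $\mathbf{1}_{\eta}^{T}\mathbf{1}_{\eta}=\eta$ exactly cancels the $1/\eta$ normalization; hence the rank-one crosstalk term contributes $\gamma\,\mathbf{1}_{l_{S}}\mathbf{1}_{1/\mu}^{T}$ and the identity term contributes $(1-\gamma)\mathbf{G}$. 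Left-multiplying by $\tilde{\mathbf{Y}}_{k}=\mathbf{Y}_{k}\otimes\mathbf{1}_{\eta}^{T}$ and invoking the mixed-product rule again, each inner factor $\mathbf{1}_{\eta}^{T}\mathbf{1}_{\eta}=\eta$ reappears, and the two terms collapse to $\mathbf{Y}_{k}\bigl(\gamma\eta\,\mathbf{1}_{1/\mu}\mathbf{1}_{1/\mu}^{T}+(1-\gamma)\mathbf{I}_{1/\mu}\bigr)=\mathbf{Y}_{k}\mathbf{H}_{c}^{\eta}$, which is precisely the matrix (\ref{eq:-6}). Notably, the amplification of the crosstalk coefficient from $\gamma$ to $\gamma\eta$ is exactly the trace left by MRC coherently summing the $\eta$ replicas, and this is the structural heart of the statement.

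It then remains to vectorize. Since $\mathbf{H}_{c}^{\eta}$ is symmetric and $\text{vec}(\mathbf{Y}_{k})=\mathbf{y}_{k}$ by the definition of the mapping (\ref{eq:}), the vec-identity gives $\text{vec}(\mathbf{Y}_{k}\mathbf{H}_{c}^{\eta})=(\mathbf{H}_{c}^{\eta}\otimes\mathbf{I}_{l_{F}})\mathbf{y}_{k}$, delivering the claimed operator; stacking over $k$ reproduces $(\mathbf{H}_{c}^{\eta}\otimes\mathbf{I}_{l_{F}})\mathbf{Y}+\mathbf{W}$ with $\mathbf{Y}$ as in (\ref{eq:-14}). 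For the noise statistics I would write $\mathbf{w}_{k}=\text{vec}(\tilde{\mathbf{W}}_{k}\mathbf{G})=(\mathbf{G}^{T}\otimes\mathbf{I}_{l_{F}})\,\text{vec}(\tilde{\mathbf{W}}_{k})$; since $\text{vec}(\tilde{\mathbf{W}}_{k})\sim\mathcal{CN}(\mathbf{0},\mathbf{I}_{l_{S}l_{F}})$, the transformed covariance equals $(\mathbf{G}^{T}\mathbf{G})\otimes\mathbf{I}_{l_{F}}$, and a one-line Kronecker computation yields $\mathbf{G}^{T}\mathbf{G}=\eta^{-1}\mathbf{I}_{1/\mu}$, whence $\mathbf{w}_{k}\sim\mathcal{CN}(\mathbf{0},\eta^{-1}\mathbf{I}_{n_{F}})$ using $\tfrac{1}{\mu}l_{F}=n_{F}$; independence across $k$ follows from the independence of the cable noise across ENs. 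The only genuine care required throughout is dimensional bookkeeping---keeping the relations $l_{S}=\eta/\mu$, $l_{F}=\mu n_{F}$ and $\tfrac{1}{\mu}l_{F}=n_{F}$ straight---rather than any conceptual obstacle, since every nontrivial cancellation is governed by the single scalar $\mathbf{1}_{\eta}^{T}\mathbf{1}_{\eta}=\eta$.
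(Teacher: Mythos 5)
Your proposal is correct and follows essentially the same route as the paper's proof in Appendix A: split the combiner output $\tilde{\mathbf{Y}}_{k}\mathbf{H}_{c}\mathbf{G}$ into the rank-one crosstalk term and the diagonal term, exploit the factorization $\mathbf{1}_{l_{S}}=\mathbf{1}_{1/\mu}\otimes\mathbf{1}_{\eta}$ together with the mixed-product rule so that $\mathbf{1}_{\eta}^{T}\mathbf{1}_{\eta}=\eta$ produces the $\gamma\eta$ coefficient, obtain $\mathbf{R}_{k}=\mathbf{Y}_{k}\mathbf{H}_{c}^{\eta}+\mathbf{W}_{k}$, and then vectorize via $\text{vec}(\mathbf{A}\mathbf{B}\mathbf{C})=(\mathbf{C}^{T}\otimes\mathbf{A})\text{vec}(\mathbf{B})$ and compute $\mathbf{G}^{T}\mathbf{G}=\eta^{-1}\mathbf{I}_{1/\mu}$ for the noise covariance before stacking over $k$. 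The only cosmetic difference is that you simplify $\mathbf{H}_{c}\mathbf{G}$ before left-multiplying by $\tilde{\mathbf{Y}}_{k}$, whereas the paper carries $\tilde{\mathbf{Y}}_{k}$ through both terms from the start; the algebra and the conclusions are identical.
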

\begin{proof}
see Appendix \ref{subsec:Example:-Signal-Combining}.
\end{proof}
To gain some insights, it is useful again to consider the
two extreme cases of maximum redundancy, i.e., $\eta=l_{S}$, and no
redundancy, i.e., $\eta=1$. In the former case, the equivalent channel
(\ref{eq:-6}) equals the scalar $\mathbf{H}_{c}^{\eta}=1+\gamma(l_{S}-1)$.
This demonstrates the effect of transmitting a replica of the whole
radio signal over all pairs. In fact, the useful signal is received
at the BBU not only through the direct path, which has unit gain,
but also from the remaining $l_{S}-1$ interfering paths, each with
gain $\gamma$, which constructively contribute to the overall SNR
after the combiner. More precisely, it can be observed that, in case
of full redundancy, the SNR of the radio signal $\mathbf{Y}$ at the
BBU is increased by the analog fronthaul links by a factor of $(1+\gamma(l_{S}-1))^{2}/(1/\eta)=l_{S}(1+\gamma(l_{S}-1))^{2}$.
As a result, in this case, for a coupling factor $\gamma>0$, the
SNR at the BBU increases with the cube of the number of fronthaul
links $l_{S}$. In contrast, for $\mu=1/l_{S}$, the equivalent fronthaul
channel reflects the fact that signals forwarded over the different
pairs interfere with each other, and is equal to $\mathbf{H}_{c}^{\eta}=\mathbf{H}_{c}$.
The beneficial effect of redundantly transmitting radio signals over
different pairs is reflected also in the power of the noise after
the combiner, which is reduced proportionally to the bandwidth amplification
factor $\eta$. 

\subsection{Fronthaul Power Constraints}

To enforce the cable power constraints in (\ref{eq:-3}),
it is necessary to scale the radio signal $\mathbf{Y}$ in (\ref{eq:-28})
by a factor of $\lambda$ prior to the transmission over the fronthaul.
This is given as
\begin{equation}
\lambda=\sqrt{\frac{P_{c}}{\delta P_{B}(1+2\alpha^{2})+1}},\label{eq:-18-2}
\end{equation}
where $\delta$ is equal to $\delta=\left(1-L_{U}^{-1}\right)^{-1}$ for OMA, accounting
for the fact that only $L_{U}-1$ minislots are devoted to the eMBB
UE, while it equals $\delta=1$ for NOMA, since the eMBB transmission
spreads over all $L_{U}$ minislots. To simplify the notation, in the
following we will account for the gain $\lambda$ by scaling the noise
over the cable after MRC in Equation (\ref{eq:-6}) accordingly as 
\begin{equation}
\mathbf{w}_{k}(t)\sim\mathcal{CN}\left(\mathbf{0},\frac{1}{\eta\lambda^{2}}\mathbf{I}\right).\label{eq:-18}
\end{equation}

\section{Orthogonal Multiple Access (OMA)\label{sec:Orthogonal-Multiple-Access}}

As described in Section \ref{subsec:RAN-Model}, under OMA over the radio
channel, one minislot every $L_{U}$ is exclusively allocated to URLLC
UEs, while eMBB UEs transmit over the remaining minislots. In this
way, URLLC UEs never interfere with eMBB transmissions. If more than
one URLLC packet is generated at a user between two URLLC transmission
opportunities, only one of such packets (randomly selected) is transmitted,
while the others are discarded, causing a blockage error. Due to the
latency constraints, URLLC signals are digitized and decoded locally
at the ENs, while the eMB signals are first mapped over the fronthaul
lines, and then analogically forwarded to the BBU, as mathematically
summarized in Figure \ref{fig:Orthogonal-Multiple-Access}. In this
section, we derive the expressions for the eMBB and URLLC rates under
OMA for a given URLLC access latency $L_{U}$ and, in the case of
URLLC, for a fixed URLLC target error probability $\epsilon_{U}$.
\begin{figure}
\begin{centering}
\includegraphics[width=0.5\paperwidth]{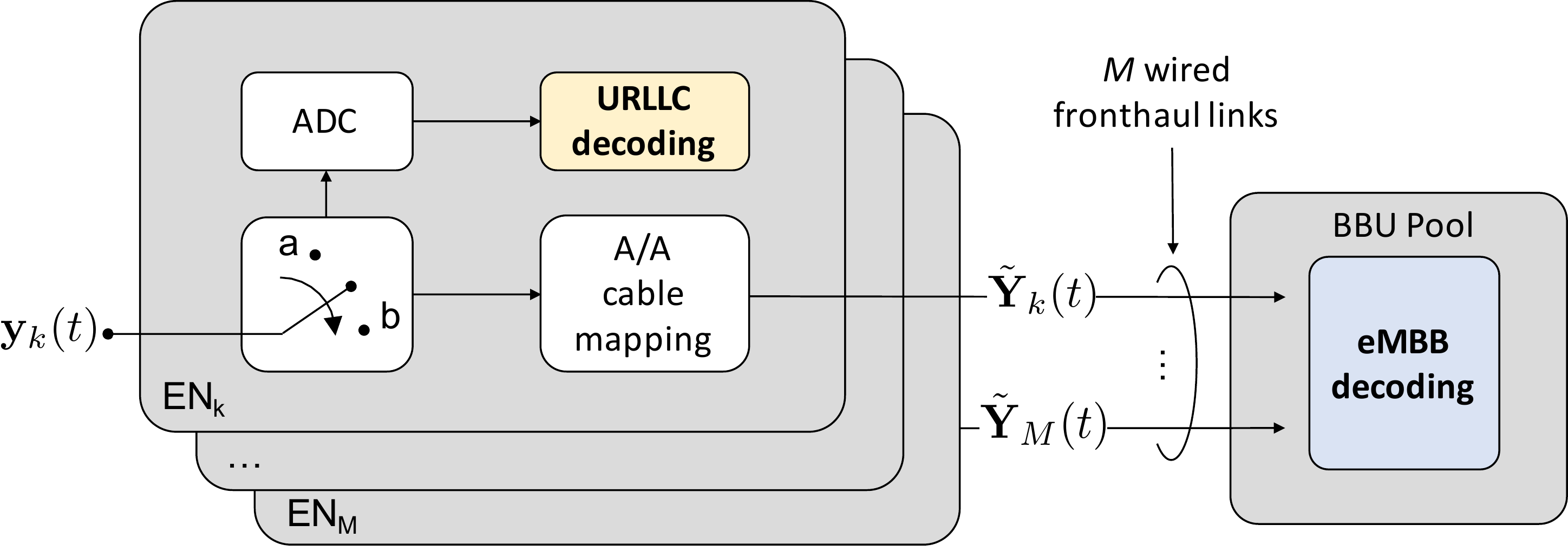}
\par\end{centering}
\caption{\label{fig:Orthogonal-Multiple-Access}Block diagram of the operation
of the ENs and BBU for Orthogonal Multiple Access (OMA). A/A stands for Analog-to-Analog.}
\end{figure}

\subsection{URLLC Rate}

To evaluate the per-UE URLLC rate under OMA and for a given
URLLC target error probability $\epsilon_{U}$, we follow the approach
in \cite{Kassab}, which is reviewed here. URLLC packets are generally
short due to the strict latency constraints, and the maximum achievable
rate can be computed by leveraging results from finite blocklength
information theory. To this end, fix a given blocklength $n_{F}$
and URLLC error decoding probability $\epsilon_{U}^{D}$. Notice that
this probability is different from the general URLLC target error
probability $\epsilon_{U}$, as detailed later in this section. According
to \cite{Polyanskiy-Poor-Verdu_2010}, the URLLC rate can be well
approximated by
\begin{equation}
R_{U}=\text{log}_{2}(1+\beta^{2}P_{U})-\sqrt{\frac{V}{n_{F}}}Q^{-1}(\epsilon_{U}^{D}),\label{eq:-34}
\end{equation}
where 
\begin{equation}
V=\frac{\beta^{2}P_{U}}{1+\beta^{2}P_{U}}
\end{equation}
is the channel dispersion and $\mathcal{Q}^{-1}(\cdot)$ is the inverse
Q-function (see Section \ref{subsec:Notation}).

The error probability for URLLC packets is the sum of two contributions.
The first represents the probability that an URLLC packet is discarded
due to blockage, given that only one URLLC packet can be transmitted
within the required $L_{U}$ worst-case latency; while the second
is the probability that the packet is transmitted but not successfully
decoded. Accordingly, the overall error probability can be computed
as
\begin{equation}
\text{Pr}[E_{U}]=\sum_{n}^{L_{U}-1}p(n)\frac{n}{n+1}+\sum_{n}^{L_{U}-1}p(n)\frac{1}{n+1}\epsilon_{U}^{D},\label{eq:-32}
\end{equation}
where $p(n)=\text{Pr}[N_{U}(L_{U})=n]$ is the distribution of the
binomial random variable $N_{U}(L_{U})\sim\text{Bin}(L_{U}-1,q)$
representing the number of additional packets generated by the URLLC
UE during the remaining minislots between two transmission opportunities.
The decoding error probability $\epsilon_{U}^{D}$ in (\ref{eq:-34})
can be obtained from the URLLC reliability constraint in (\ref{eq:-33}),
i.e., $\text{Pr}[E_{U}]=\epsilon_{U}$.

\subsection{eMBB Rate}

The eMBB signals received at the ENs are forwarded over the analog
fronthaul to the BBU, where centralized digital signal processing
and decoding are performed. In the case of OMA, the eMBB signal is
free from URLLC interference, hence signal $\mathbf{Y}$ received
by all ENs over all radio channels in (\ref{eq:-14}) can be written
as
\begin{equation}
\mathbf{Y}=\mathbf{X}\mathbf{H}+\mathbf{Z}.\label{eq:-15-1}
\end{equation}
By substituting (\ref{eq:-15-1}) in (\ref{eq:-28}), it is possible
to compute the expression for the eMBB per-UE rate under OMA, as shown
in Lemma \ref{lem:In-the-case}. Notice that, unlike the case of URLLC
packets, the $n_{F}n_{T}$ blocklength of eMBB packets allows for
the use of standard asymptotic Shannon theory in the computation of
eMBB information rate.
\begin{lem}
\label{lem:In-the-case}In the given C-RAN architecture with analog
fronthaul links, for a given bandwidth amplification factor $\eta\geq1$,
the eMBB user rate under OMA is given as
\begin{equation}
R_{B}=\mu\frac{1-L_{U}^{-1}}{M}\text{\text{log}}\left(\text{det}\left(\mathbf{I}+\bar{P}_{B}\mathbf{R}_{z_{\text{eq}}}^{-1}\mathbf{H}_{\text{eq}}\mathbf{H}_{\text{eq}}^{T}\right)\right),\label{eq:-36}
\end{equation}
where $\bar{P}_{B}=P_{B}\left(1-L_{U}^{-1}\right){}^{-1}$ is the
transmission power of eMBB users under OMA, $\mathbf{H}_{\text{eq}}=\mathbf{H}\otimes\mathbf{H}_{c}^{\eta}$
is the overall channel matrix comprising both the radio channel $\mathbf{H}$
and the equivalent cable channel $\mathbf{H}_{c}^{\eta}$ defined
in Lemma \ref{lem:In-C-RAN-architecture}, and $\mathbf{R}_{z_{\text{eq}}}=\mathbf{I}_{M}\otimes\mathbf{H}_{c}^{\eta}\mathbf{H}_{c}^{\eta}+\frac{1}{\lambda^{2}\eta}\mathbf{I}_{\frac{M}{\mu}}$
is the overall wireless plus cable noise at the BBU.
\end{lem}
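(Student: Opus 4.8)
The plan is to start from the interference-free OMA signal model, push the eMBB signal through the analog fronthaul using Lemma \ref{lem:In-C-RAN-architecture}, recast the resulting matrix relation as a vectorized Gaussian MIMO channel, and then exploit its Kronecker structure to collapse the $Mn_F$-dimensional log-determinant into the claimed $(M/\mu)$-dimensional one.

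First I would substitute the OMA relation $\mathbf{Y}=\mathbf{X}\mathbf{H}+\mathbf{Z}$ in (\ref{eq:-15-1}) into the post-MRC expression $\mathbf{R}=(\mathbf{H}_{c}^{\eta}\otimes\mathbf{I}_{l_{F}})\mathbf{Y}+\mathbf{W}$ from Lemma \ref{lem:In-C-RAN-architecture}, obtaining
\[
\mathbf{R}=(\mathbf{H}_{c}^{\eta}\otimes\mathbf{I}_{l_{F}})\mathbf{X}\mathbf{H}+(\mathbf{H}_{c}^{\eta}\otimes\mathbf{I}_{l_{F}})\mathbf{Z}+\mathbf{W}.
\]
Then I would vectorize via $\mathrm{vec}(\mathbf{A}\mathbf{B}\mathbf{C})=(\mathbf{C}^{T}\otimes\mathbf{A})\,\mathrm{vec}(\mathbf{B})$, using that $\mathbf{H}$ and $\mathbf{H}_{c}^{\eta}$ are real and symmetric (so transpose and conjugate-transpose coincide and $\mathbf{H}^{T}=\mathbf{H}$), to write $\mathrm{vec}(\mathbf{R})=\mathbf{H}_{\text{full}}\,\mathrm{vec}(\mathbf{X})+\tilde{\mathbf{n}}$ with $\mathbf{H}_{\text{full}}=\mathbf{H}\otimes\mathbf{H}_{c}^{\eta}\otimes\mathbf{I}_{l_{F}}$. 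The noise $\tilde{\mathbf{n}}$ is zero-mean Gaussian whose covariance is the sum of the filtered wireless part $\mathbf{I}_{M}\otimes\mathbf{H}_{c}^{\eta}\mathbf{H}_{c}^{\eta}\otimes\mathbf{I}_{l_{F}}$ (because $\mathrm{vec}(\mathbf{Z})$ is white) and the MRC cable-noise part $\tfrac{1}{\lambda^{2}\eta}\mathbf{I}$ from (\ref{eq:-18}); factoring $\otimes\mathbf{I}_{l_{F}}$ out of both terms shows this covariance equals $\mathbf{R}_{z_{\text{eq}}}\otimes\mathbf{I}_{l_{F}}$.

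Next I would invoke the standard Gaussian-MIMO mutual-information formula for i.i.d. inputs of per-symbol power $\bar{P}_{B}$, namely $\log\det\!\big(\mathbf{I}+\bar{P}_{B}\,(\mathbf{R}_{z_{\text{eq}}}\otimes\mathbf{I}_{l_{F}})^{-1}\mathbf{H}_{\text{full}}\mathbf{H}_{\text{full}}^{T}\big)$. Since $\mathbf{H}_{\text{full}}\mathbf{H}_{\text{full}}^{T}=(\mathbf{H}_{\text{eq}}\mathbf{H}_{\text{eq}}^{T})\otimes\mathbf{I}_{l_{F}}$ and the noise covariance is $\mathbf{R}_{z_{\text{eq}}}\otimes\mathbf{I}_{l_{F}}$, the mixed-product rule lets me pull the common $\otimes\mathbf{I}_{l_{F}}$ through the inverse and the product, so the matrix inside the determinant becomes $\big(\mathbf{I}_{M/\mu}+\bar{P}_{B}\,\mathbf{R}_{z_{\text{eq}}}^{-1}\mathbf{H}_{\text{eq}}\mathbf{H}_{\text{eq}}^{T}\big)\otimes\mathbf{I}_{l_{F}}$. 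Applying $\det(\mathbf{A}\otimes\mathbf{I}_{l_{F}})=(\det\mathbf{A})^{l_{F}}$ then reduces the full log-determinant to $l_{F}\log\det\!\big(\mathbf{I}+\bar{P}_{B}\mathbf{R}_{z_{\text{eq}}}^{-1}\mathbf{H}_{\text{eq}}\mathbf{H}_{\text{eq}}^{T}\big)$, which is the mutual information accumulated in a single eMBB-active minislot.

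Finally I would carry out the normalization. Averaging over the frame introduces the fraction $(1-L_{U}^{-1})$ of minislots allocated to eMBB under OMA; meeting the frame-averaged power budget while transmitting only in those minislots boosts the per-symbol power to $\bar{P}_{B}=P_{B}(1-L_{U}^{-1})^{-1}$; and by the symmetry of the circulant Wyner model the sum is split equally among the $M$ users, so dividing by $n_{T}n_{F}$ (per symbol) and $M$ (per UE) together with $l_{F}/n_{F}=\mu$ yields the prefactor $\mu(1-L_{U}^{-1})/M$, giving (\ref{eq:-36}). I expect the only delicate point to be the consistent bookkeeping of the Kronecker factors, in particular verifying that the $\mathbf{I}_{l_{F}}$ block genuinely factors out of both the signal Gram matrix and the noise covariance, which is exactly what permits the determinant to shrink from dimension $Mn_{F}$ to $M/\mu$; this relies on the frequency-flat fronthaul channel and the whiteness of $\mathbf{Z}$ and $\mathbf{W}$. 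The mutual-information step and the power/latency normalization are otherwise routine.
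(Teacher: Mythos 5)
Your proposal is correct and follows essentially the same route as the paper's Appendix B: substitute the OMA model into the post-MRC expression of Lemma \ref{lem:In-C-RAN-architecture}, vectorize to obtain $\mathbf{r}=\bar{\mathbf{H}}_{\text{eq}}\mathbf{x}+\bar{\mathbf{z}}_{\text{eq}}$ with $\bar{\mathbf{H}}_{\text{eq}}=\mathbf{H}\otimes\mathbf{H}_{c}^{\eta}\otimes\mathbf{I}_{l_{F}}$, apply the Gaussian log-det formula, and use the Kronecker determinant identity together with the $(1-L_{U}^{-1})/(n_{F}M)$ normalization and $l_{F}/n_{F}=\mu$. Your explicit verification that the noise covariance also factors as $\mathbf{R}_{z_{\text{eq}}}\otimes\mathbf{I}_{l_{F}}$ (so that $\mathbf{I}_{l_{F}}$ can be pulled through the inverse before taking the determinant) is a slightly more careful rendering of a step the paper states tersely, but it is the same argument.
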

\begin{proof}
see Appendix. \ref{Lemma2}.
\end{proof}
As a first observation, the eMBB rate (\ref{eq:-36}) linearly scales
with the normalized bandwidth $\mu$. This shows that a potential
performance degradation in terms of spectral efficiency can be incurred
in the presence of fronthaul channels with bandwidth limitations,
i.e., with $\mu<1$. This loss is pronounced in the presence of significant
inter-channel interference, i.e., for large $\gamma$. In fact, a
large $\gamma$ increases the effective noise power as per expression
of matrix $\mathbf{R}_{z_{\text{eq}}}$. It is also important to point
out that in the considered C-RAN system based on the analog relaying
of radio signals, the overall noise at the BBU is no longer white
as it accounts both for the white cable noise and the wireless noise,
where the latter is correlated when there is some bandwidth redundancy,
i.e., when $\mu\geq1/l_{S}$ or $\eta>1$.

\section{Non-Orthogonal Multiple Access (NOMA)\label{sec:Non-Orthogonal-Multiple-Access}}

In NOMA, URLLC UEs transmit in the same minislot where the packet
is generated, and hence the access latency is minimal and limited
to $L_{U}=1$ minislot. However, the URLLC signals mutually interfere
with the eMBB transmission, which spans the whole time-frequency resource
plane. Due to URLLC latency constraints, the eMBB signals necessarily
need to be treated as noise while decoding URLLC packets at the ENs.
On the contrary, several strategies can be adapted in order to deal with 
the interfering URLLC signal. Beside
puncturing, considered for 5G NR standardization
\cite{qualcomm_puncturing,3GPP_FinalReport_Puncturing}, this work
considers two other techniques, namely Treating Interference as Noise
(TIN) and Successive Interference Cancellation (SIC), as detailed
in the rest of this section.

\subsection{URLLC Rate under NOMA}

The URLLC per-UE rate for NOMA can be computed by leveraging results
from finite blocklength information theory similarly to the OMA case,
but accounting for the additional eMBB interference~\cite{scarlett2017dispersion}.
The URLLC per-UE rate under NOMA is thus well approximated by \cite{Kassab}
\[
R_{U}=\text{log}_{2}(1+S_{U})-\sqrt{\frac{V}{n_{F}}}Q^{-1}(\epsilon_{U}^{D}),
\]
where 
\begin{equation}
S_{U}=\frac{\beta^{2}P_{U}}{1+(1+2\alpha^{2})P_{B}}
\end{equation}
is the Signal-to-Interference-plus-Noise Ratio (SINR) for the URLLC
UE, and the channel dispersion $V$ is given as 
\begin{equation}
V=\frac{S_{U}}{1+S_{U}}.
\end{equation}
Notice that in NOMA the incoming URLLC packet is always transmitted,
and hence an URLLC error occurs only if the decoding of such packet
fails, which happens with probability $\epsilon_{U}^{D}$. This implies
that under NOMA, the probability of URLLC error is given by
\begin{equation}
\text{Pr}[E_{U}]=\epsilon_{U}^{D},\label{eq:-35}
\end{equation}
hence imposing the condition $\epsilon_{U}^{D}\leq\epsilon_{U}$ by
the requirement (\ref{eq:-33}).

\subsection{eMBB Rate by Puncturing}

To carry out joint decoding at the BBU of the eMBB signals
under NOMA, the standard approach is to simply discard at the eMBB
decoder those signals that are interfered by URLLC. As shown in Figure
\ref{fig:Non-Orthogonal-Multiple-Access}, this technique, referred
to as puncturing, is based on the detection of URLLC transmissions
at the BBU: if a URLLC transmission is detected in the signal received
from EN$_{k}$, such signal is discarded. Otherwise, the interference-free
eMBB signals are jointly decoded at the BBU. 
\begin{figure}
\begin{centering}
\includegraphics[width=0.5\paperwidth]{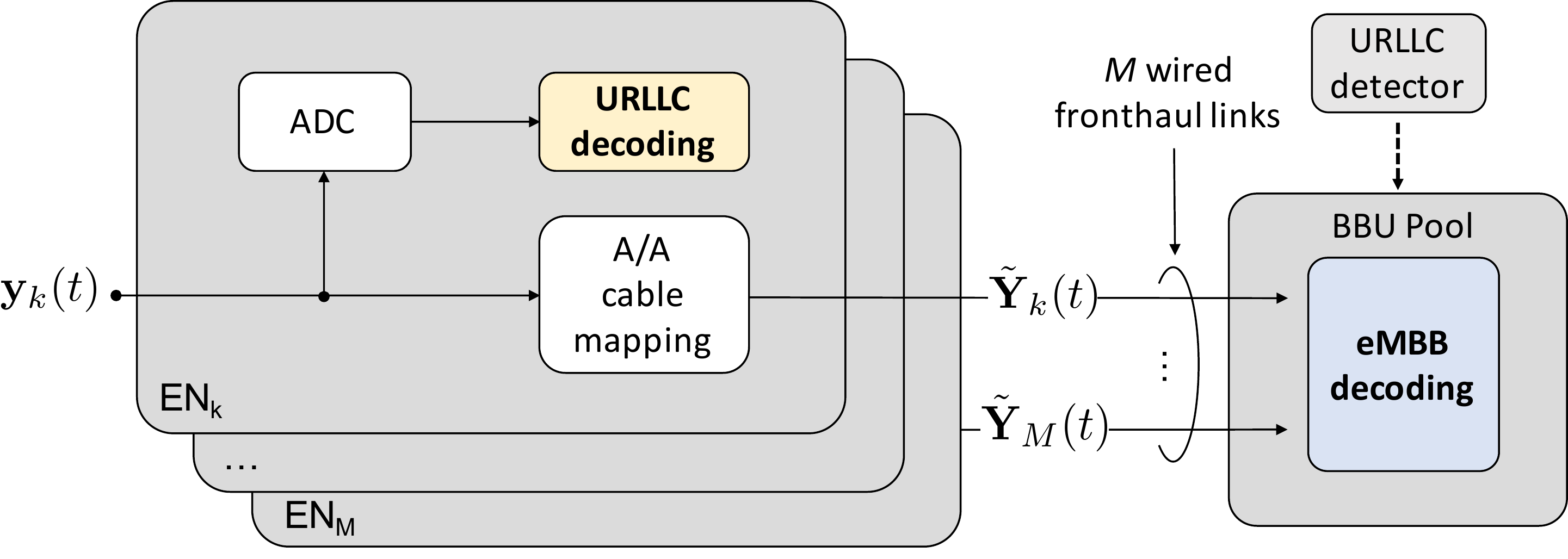}
\par\end{centering}
\caption{\label{fig:Non-Orthogonal-Multiple-Access}Block diagram of the operation
of the ENs and BBU for Non-Orthogonal Multiple Access (NOMA) by puncturing
and Treating Interference as Noise (TIN).  A/A stands for Analog-to-Analog.}
\end{figure}

Considering the aforementioned assumptions, the signal model for puncturing
can be equivalently described by assuming that if the signal $Y_{k}^{f}$
received at the EN$_{k}$ on frequency $f$ is interfered by an URLLC
transmission, then such signal is discarded, and the BBU receives
only noise. This is mathematical described by
\begin{equation}
Y_{k}^{f}=B_{k}(X_{k}^{f}+\alpha X_{k+1}^{f}+\alpha X_{k-1}^{f})+Z_{k}^{f},\label{eq:-19}
\end{equation}
where the Bernoulli variable $B_{k}=1-A_{k}\sim\mathcal{B}(1-q)$
indicates the absence ($B_{k}=1$) or presence ($B_{k}=0$) of URLLC
transmissions in the given minislot. The signal in (\ref{eq:-19})
received across all ENs and frequencies can be written in matrix form
as 
\begin{equation}
\mathbf{Y}=\mathbf{X}\mathbf{H}\mathbf{B}+\mathbf{Z},\label{eq:-24}
\end{equation}
with definitions given in Section \ref{sec:System-Model} and with $\mathbf{B}=\text{diag}(B_{1},B_{2},...,B_{M})$.
The rate for the eMBB UE under NOMA by puncturing is reported in Lemma
\ref{lem:In-the-case-1-1} and can be derived by substituting signal
(\ref{eq:-24}) in Equation (\ref{eq:-28}).
\begin{lem}
\label{lem:In-the-case-1-1}In the given C-RAN architecture with analog
fronthaul links, for a given bandwidth amplification factor $\eta\geq1$,
the eMBB user rate under NOMA by puncturing yields
\begin{equation}
R_{B}=\frac{\mu}{M}\mathbb{E}_{\mathbf{B}}\left[\text{\text{log}}\left(\text{det}\left(\mathbf{I}+P_{B}\mathbf{R}_{z_{\text{eq}}}^{-1}\mathbf{H}_{B,\text{eq}}\mathbf{H}_{B,\text{eq}}^{T}\right)\right)\right],\label{eq:-39}
\end{equation}
where $\mathbf{R}_{z_{\text{eq}}}$ is defined as in Lemma \ref{lem:In-the-case};
$\mathbf{H}_{B,\text{eq}}=\mathbf{B}\mathbf{H}\otimes\mathbf{H}_{c}^{\eta}$
is the equivalent wireless plus cable channel in case of puncturing;
and we have $\mathbf{B}=\text{diag}(B_{1},B_{2},...,B_{M})$ , with
$B_{k}$ being i.i.d. $\mathcal{B}(1-q)$ variables .
\end{lem}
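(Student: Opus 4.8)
The plan is to mirror the derivation of the OMA rate in Lemma~\ref{lem:In-the-case}, the only structural differences being the substitution of the puncturing channel $\mathbf{H}\mathbf{B}$ for $\mathbf{H}$, the ensuing randomness of the effective channel through $\mathbf{B}$, and the absence of the minislot-sacrifice factor $1-L_{U}^{-1}$ since $L_{U}=1$ under NOMA. Concretely, I would substitute the puncturing signal model \eqref{eq:-24}, $\mathbf{Y}=\mathbf{X}\mathbf{H}\mathbf{B}+\mathbf{Z}$, into the post-MRC fronthaul relation \eqref{eq:-28}, yielding
\begin{equation}
\mathbf{R}=\left(\mathbf{H}_{c}^{\eta}\otimes\mathbf{I}_{l_{F}}\right)\left(\mathbf{X}\mathbf{H}\mathbf{B}+\mathbf{Z}\right)+\mathbf{W},
\end{equation}
and then vectorize this identity to expose an equivalent MIMO channel with input $\text{vec}(\mathbf{X})$.

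Second, applying $\text{vec}(\mathbf{A}\mathbf{X}\mathbf{C})=(\mathbf{C}^{T}\otimes\mathbf{A})\text{vec}(\mathbf{X})$ with $\mathbf{A}=\mathbf{H}_{c}^{\eta}\otimes\mathbf{I}_{l_{F}}$ and $\mathbf{C}=\mathbf{H}\mathbf{B}$, and using that $\mathbf{H}$ is a symmetric circulant matrix while $\mathbf{B}$ is diagonal, so that $(\mathbf{H}\mathbf{B})^{T}=\mathbf{B}\mathbf{H}$, the signal term becomes $\left(\mathbf{B}\mathbf{H}\otimes\mathbf{H}_{c}^{\eta}\otimes\mathbf{I}_{l_{F}}\right)\text{vec}(\mathbf{X})=\left(\mathbf{H}_{B,\text{eq}}\otimes\mathbf{I}_{l_{F}}\right)\text{vec}(\mathbf{X})$, which identifies $\mathbf{H}_{B,\text{eq}}=\mathbf{B}\mathbf{H}\otimes\mathbf{H}_{c}^{\eta}$. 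For the noise, I would compute the covariance of $\left(\mathbf{I}_{M}\otimes\mathbf{H}_{c}^{\eta}\otimes\mathbf{I}_{l_{F}}\right)\text{vec}(\mathbf{Z})+\text{vec}(\mathbf{W})$ from $\text{Cov}(\text{vec}(\mathbf{Z}))=\mathbf{I}$ and the scaled cable-noise statistic \eqref{eq:-18}, and factor out $\mathbf{I}_{l_{F}}$ to obtain exactly $\mathbf{R}_{z_{\text{eq}}}\otimes\mathbf{I}_{l_{F}}$, with $\mathbf{R}_{z_{\text{eq}}}$ as in Lemma~\ref{lem:In-the-case}; note that $\mathbf{B}$ multiplies only the signal and not the noise, so $\mathbf{R}_{z_{\text{eq}}}$ is unchanged from the OMA case.

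Third, with i.i.d.\ Gaussian eMBB inputs of per-symbol power $P_{B}$ (no power boost, since all minislots carry eMBB when $L_{U}=1$), the per-channel-use mutual information for fixed $\mathbf{B}$ is $\log\det\bigl(\mathbf{I}+P_{B}\mathbf{R}_{n}^{-1}\mathbf{G}\mathbf{G}^{T}\bigr)$ with $\mathbf{G}=\mathbf{H}_{B,\text{eq}}\otimes\mathbf{I}_{l_{F}}$ and $\mathbf{R}_{n}=\mathbf{R}_{z_{\text{eq}}}\otimes\mathbf{I}_{l_{F}}$ (real channel, hence $\mathbf{G}^{H}=\mathbf{G}^{T}$). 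Using the mixed-product rule and $\det(\mathbf{A}\otimes\mathbf{I}_{l_{F}})=(\det\mathbf{A})^{l_{F}}$, this collapses to $l_{F}\log\det\bigl(\mathbf{I}+P_{B}\mathbf{R}_{z_{\text{eq}}}^{-1}\mathbf{H}_{B,\text{eq}}\mathbf{H}_{B,\text{eq}}^{T}\bigr)$. Dividing the accumulated mutual information by the number of users $M$ and the eMBB blocklength $n_{T}n_{F}$, and using $l_{F}/n_{F}=\mu$, yields the prefactor $\mu/M$, with the factor $1-L_{U}^{-1}$ absent precisely because no minislot is reserved for URLLC.

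Finally, the step I expect to require the most care is the passage from a per-minislot quantity to the $\mathbb{E}_{\mathbf{B}}[\cdot]$ expression, which is the only genuinely new ingredient relative to the deterministic OMA derivation. Because the eMBB codeword spans $n_{T}\gg1$ minislots, each carrying an independent realization of $\mathbf{B}$ (the $B_{k}$ being i.i.d.\ $\mathcal{B}(1-q)$), and because the BBU detects the URLLC activity and hence knows $\mathbf{B}$, the relevant achievable rate is the ergodic one: the normalized sum $\tfrac{1}{n_{T}}\sum_{t}\log\det(\cdot)$ concentrates on $\mathbb{E}_{\mathbf{B}}[\log\det(\cdot)]$. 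I would justify this via the standard fast-fading capacity argument with receiver channel-state information; the remaining Kronecker algebra is routine and identical in spirit to the proof of Lemma~\ref{lem:In-the-case}.
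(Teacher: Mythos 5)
Your proposal is correct and takes essentially the same approach as the paper: the paper's own proof of this lemma is a two-line remark instructing the reader to repeat the derivation of Lemma~\ref{lem:In-the-case} with $\mathbf{H}$ right-multiplied by $\mathbf{B}$ and with the capacity averaged over $\mathbf{B}$, which is exactly what you carry out. Your additional details --- the transpose step $(\mathbf{H}\mathbf{B})^{T}=\mathbf{B}\mathbf{H}$ yielding $\mathbf{H}_{B,\text{eq}}$, the observation that $\mathbf{B}$ leaves the noise covariance untouched, the removal of the $1-L_{U}^{-1}$ factor, and the ergodic justification of $\mathbb{E}_{\mathbf{B}}[\cdot]$ over the $n_{T}\gg1$ minislots with $\mathbf{B}$ known at the BBU --- are all consistent with the paper's intended argument.
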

\begin{proof}
Lemma \ref{lem:In-the-case-1-1} can be proved by following similar
steps as for the proof of Lemma 2 with two minor differences: \textit{i)}
the radio channel matrix $\mathbf{H}$ is right multiplied by the
random matrix $\mathbf{B}$ and \textit{ii) }the capacity is computed
by averaging over the distribution of $\mathbf{B}$.
\end{proof}
Differently from the rate (\ref{eq:-36}) achieved by OMA, under NOMA,
the eMBB transmission spreads over all the minislots, so that there
is no scaling factor $1-L_{U}^{-1}$ in front of the rate expression
(\ref{eq:-39}) to account for the resulting loss in spectral efficiency.
In case of NOMA by puncturing, the noise covariance is exactly as
the one in the eMBB OMA rate in Equation (\ref{eq:-36}), since the eMBB
signal, if not discarded at the BBU, is guaranteed to be URLLC interference-free.
The overall rate is computed by averaging over all the possible realizations
of the random matrix $\mathbf{B}$, which left-multiplies the radio
channel matrix $\mathbf{H}$ and accounts for the probability that
the entire signal is discarded due to an incoming URLLC packet.

In the case of C-RAN with digital limited-capacity fronthaul links,
as discussed in \cite{Kassab}, it is advantageous to carry out the
operation of detecting and, eventually, discarding the eMBB signal
at the ENs. In fact, with a digital fronthaul, only the undiscarded
minislots can be quantized, hence devoting the limited fronthaul resources
to increase the resolution of interference-free eMBB samples \cite{Kassab}.
The same does not apply to the analog fronthaul considered here, as
signals are directly relayed to the BBU without any digitization. 

\subsection{eMBB Rate by Treating Interference as Noise}

In the case of analog fronthaul, an enhanced strategy to jointly decode
the eMBB signals under NOMA at the BBU is to treat the URLLC interfering
transmissions as noise at the eMBB decoder, instead of discarding
the corresponding minislot as in puncturing. The block diagram is
the same as for puncturing and shown in Figure \ref{fig:Non-Orthogonal-Multiple-Access}.
Accordingly, based on the signals received over the fronthaul links,
the BBU first detects the presence of URLLC transmission so as to
properly select the decoding metric. Then, based on this knowledge,
joint decoding is performed by TIN.
\begin{lem}
\label{lem:In-the-case-1}In the given C-RAN architecture with analog
fronthaul links, for a given bandwidth amplification factor $\eta\geq1$,
the eMBB user rate under NOMA by treating URLLC interference as noise
yields
\begin{equation}
R_{B}=\frac{\mu}{M}\mathbb{E}_{\mathbf{A}}\left[\text{\text{log}}\left(\text{det}\left(\mathbf{I}+P_{B}\mathbf{R}_{A,z_{\text{eq}}}^{-1}\mathbf{H}_{\text{eq}}\mathbf{H}_{\text{eq}}^{T}\right)\right)\right],\label{eq:-30}
\end{equation}
where $\mathbf{R}_{A,z_{\text{eq}}}=\mathbf{R}_{z_{\text{eq}}}+\beta^{2}P_{U}\left(\mathbf{A}\otimes\mathbf{H}_{c}^{\eta}\mathbf{H}_{c}^{\eta}\right)$
is the overall noise plus URLLC interference at the BBU; matrix $\mathbf{A}$
is as in (\ref{eq:-14}); and matrices $\mathbf{R}_{z_{\text{eq}}}$
and $\mathbf{H}_{\text{eq}}$ are the same as in Lemma \ref{lem:In-the-case}.
\end{lem}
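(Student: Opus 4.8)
The plan is to mirror the derivation of Lemma~\ref{lem:In-the-case} (OMA), folding the URLLC term of the NOMA signal into an enlarged effective noise covariance rather than into the useful signal. First I would start from the post-MRC BBU signal of Lemma~\ref{lem:In-C-RAN-architecture}, namely $\mathbf{R}=(\mathbf{H}_{c}^{\eta}\otimes\mathbf{I}_{l_{F}})\mathbf{Y}+\mathbf{W}$ in (\ref{eq:-28}), and substitute the full NOMA received signal $\mathbf{Y}=\mathbf{X}\mathbf{H}+\beta\mathbf{U}\mathbf{A}+\mathbf{Z}$ from (\ref{eq:-14}). Writing $\mathbf{C}=\mathbf{H}_{c}^{\eta}\otimes\mathbf{I}_{l_{F}}$ and vectorizing columnwise gives a standard vector Gaussian channel: using $\text{vec}(\mathbf{C}\mathbf{X}\mathbf{H})=(\mathbf{H}^{T}\otimes\mathbf{C})\text{vec}(\mathbf{X})$ together with the symmetry of $\mathbf{H}$, $\mathbf{A}$ and $\mathbf{H}_{c}^{\eta}$, the desired eMBB signal sees the channel $\mathbf{H}\otimes\mathbf{C}=\mathbf{H}\otimes\mathbf{H}_{c}^{\eta}\otimes\mathbf{I}_{l_{F}}$, exactly as in the OMA case, whereas the URLLC term $\beta(\mathbf{A}\otimes\mathbf{C})\text{vec}(\mathbf{U})$ is now grouped with the noise.

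The second step is to compute the covariance of this enlarged noise. Modeling the per-symbol URLLC power as $P_{U}$ and the eMBB-free noise exactly as in Lemma~\ref{lem:In-the-case}, the wireless-plus-cable part contributes $\mathbf{R}_{z_{\text{eq}}}\otimes\mathbf{I}_{l_{F}}$, while the URLLC contribution is $\beta^{2}P_{U}(\mathbf{A}\otimes\mathbf{C})(\mathbf{A}\otimes\mathbf{C})^{H}$. Here I would use that $\mathbf{A}$ is a $0$--$1$ diagonal matrix, so $\mathbf{A}\mathbf{A}^{H}=\mathbf{A}$, together with $\mathbf{C}\mathbf{C}^{H}=\mathbf{H}_{c}^{\eta}\mathbf{H}_{c}^{\eta}\otimes\mathbf{I}_{l_{F}}$, to obtain $\beta^{2}P_{U}(\mathbf{A}\otimes\mathbf{H}_{c}^{\eta}\mathbf{H}_{c}^{\eta})\otimes\mathbf{I}_{l_{F}}$. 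Summing the two pieces yields the total effective noise covariance $\mathbf{R}_{A,z_{\text{eq}}}\otimes\mathbf{I}_{l_{F}}$, with $\mathbf{R}_{A,z_{\text{eq}}}=\mathbf{R}_{z_{\text{eq}}}+\beta^{2}P_{U}(\mathbf{A}\otimes\mathbf{H}_{c}^{\eta}\mathbf{H}_{c}^{\eta})$ as claimed. I would then apply the Gaussian MIMO mutual-information formula with input covariance $P_{B}\mathbf{I}$ and this effective noise covariance. Because every matrix appearing in the resulting $\log\det$ is of the form $(\,\cdot\,)\otimes\mathbf{I}_{l_{F}}$, the identity $\log\det(\mathbf{M}\otimes\mathbf{I}_{l_{F}})=l_{F}\log\det(\mathbf{M})$ collapses the $n_{F}M$-dimensional expression to the $M/\mu$-dimensional one with matrices $\mathbf{H}_{\text{eq}}=\mathbf{H}\otimes\mathbf{H}_{c}^{\eta}$ and $\mathbf{R}_{A,z_{\text{eq}}}$, at the price of the factor $l_{F}$. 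Combining $l_{F}=\mu n_{F}$ with the per-UE, per-resource normalization $R_{B}=\log_{2}(M_{B})/(n_{T}n_{F})$ reproduces the prefactor $\mu/M$ and, since eMBB spans all minislots under NOMA, removes the $1-L_{U}^{-1}$ loss present in the OMA rate (\ref{eq:-36}).

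The genuinely new ingredients, and where the main work lies, are two. First, TIN achievability: treating the discrete, non-Gaussian URLLC interference as Gaussian noise is justified by the worst-case-noise principle, since for a fixed covariance Gaussian noise minimizes the mutual information, so the resulting $\log\det$ is a valid achievable rate. Second, the expectation $\mathbb{E}_{\mathbf{A}}[\cdot]$: the URLLC activity $\mathbf{A}(t)$ is i.i.d.\ across minislots and is detected at the BBU before decoding, so the eMBB link is a state-dependent channel with receiver-known state; since an eMBB codeword spans $n_{T}\gg1$ minislots, the achievable rate is the average over $\mathbf{A}$ of the per-minislot conditional mutual information, i.e.\ an ergodic argument whereby $\tfrac{1}{n_{T}}\sum_{t}\log\det(\cdots;\mathbf{A}(t))\to\mathbb{E}_{\mathbf{A}}[\log\det(\cdots)]$. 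I expect the correlation bookkeeping of the effective noise to be the most error-prone step, because both the cable noise $\mathbf{W}$ and the cable-shaped wireless noise $\mathbf{C}\mathbf{Z}$ and interference $\beta\mathbf{C}\mathbf{U}\mathbf{A}$ become correlated across the $\eta$ redundant copies when $\eta>1$; the clean $(\,\cdot\,)\otimes\mathbf{I}_{l_{F}}$ factorization is precisely the device that keeps this tractable and lets the proof proceed as a short adaptation of that of Lemma~\ref{lem:In-the-case}.
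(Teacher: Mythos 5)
Your proposal is correct and follows essentially the same route as the paper's proof: substitute the NOMA signal (\ref{eq:-14}) into the post-MRC expression, vectorize, absorb the URLLC term into an effective noise covariance using the idempotence of $\mathbf{A}$ and the mixed-product/determinant properties of the Kronecker product, and collapse the $(\cdot)\otimes\mathbf{I}_{l_F}$ structure to obtain the $\mu/M$ prefactor. The only difference is that you make explicit the worst-case-Gaussian-noise and ergodic-averaging justifications that the paper leaves implicit in writing $R_B=\tfrac{1}{n_F M}I(\mathbf{r},\mathbf{x}\,|\,\mathbf{A})$.
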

\begin{proof}
see Appendix \ref{subsec:Proof-of-Lemma}.
\end{proof}
Differently from the two previous cases, in the case of NOMA under TIN,
the noise covariance matrix $\mathbf{R}_{A,z_{\text{eq}}}$ needs
to account also for the interfering URLLC transmissions, whose packet
arrival probability is described by matrix $\mathbf{A}$. The achievable
rate is then computed by taking the average over the random matrix
$\mathbf{A}.$ This average reflects the long-blocklength transmissions
of the eMBB users.

\subsection{eMBB Rate by Successive Interference Cancellation}

Finally, a more complex receiver architecture can be considered at
the BBU, whereby interference is cancelled out from the useful signal.
This technique, referred to as Successive Interference Cancellation (SIC), is based on the idea that, if an URLLC signal is successfully
decoded at the EN$_{k}$, it can be cancelled from the overall received
signal $\mathbf{y}_{k}$ prior to the relaying over the cable, so
that an ideally interference-free eMBB signal is forwarded to the
BBU. We also assume that, if the URLLC signal is not successfully
decoded, signal $\mathbf{y}_{k}$ is discarded. 

As a practical note, SIC must be performed in the analog domain, thus
complicating the system design. Practical complications are not considered
in the analysis here. As shown in Figure \ref{fig:Non-Orthogonal-Multiple-Access-2},
if the URLLC signal is successfully decoded at EN$_{k}$, this needs
first to be Digital-to-Analog Converted (DAC) and then cancelled from
the analog signal $\mathbf{y}_{k}$. Therefore, signal $\mathbf{y}_{k}$
needs to be suitably delayed in order to wait for the cascade of ADC,
decoding, and DAC operations to be completed at the URLLC decoder.
Being latency not an issue for eMBB traffic, in this work we assume
to employ ideal ADC/DAC, so that the delay in Figure \ref{fig:Non-Orthogonal-Multiple-Access-2}
is assumed as ideally zero. 
\begin{figure}
\begin{centering}
\includegraphics[width=0.65\paperwidth]{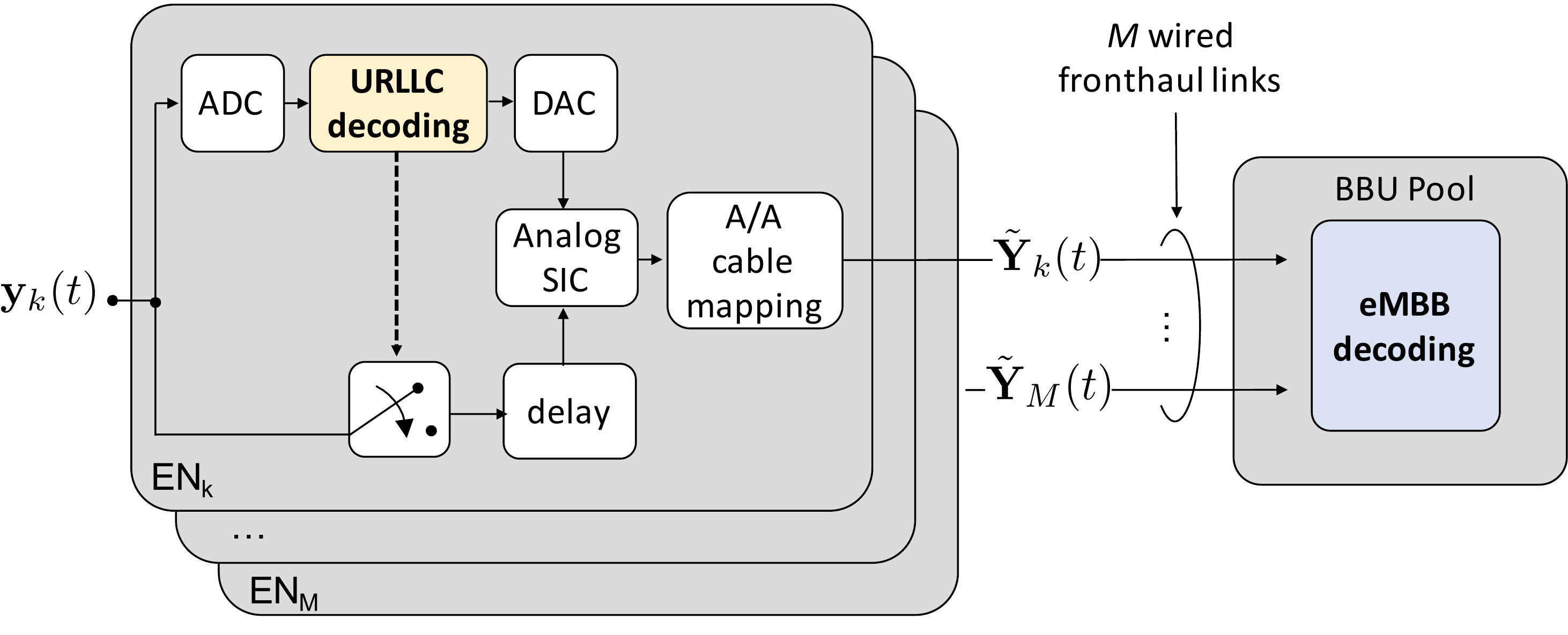}
\par\end{centering}
\caption{\label{fig:Non-Orthogonal-Multiple-Access-2}Block diagram of the
operation of the ENs and BBU for Non-Orthogonal Multiple Access (NOMA)
by Successive Interference Cancellation (SIC).  A/A stands for Analog-to-Analog.}

\end{figure}

To account for imperfect SIC, the amplitude of the residual
URLLC interference on eMBB signal is assumed to be proportional to
a factor $\rho\in[0,1]$. Accordingly, perfect SIC corresponds to
$\rho=0$, and no SIC to $\rho=1$.

The signal received at the BBU from EN$_{k}$ can be thus written
as 
\begin{equation}
Y_{k}^{f}=(1-A_{k}E_{k})(X_{k}^{f}+\alpha X_{k+1}^{f}+\alpha X_{k-1}^{f})+\rho\beta A_{k}(1-E_{k})U_{k}^{f}+Z_{k}^{f},\label{eq:-25}
\end{equation}
where the Bernoulli variable $E_{k}\sim\mathcal{B}(q\epsilon_{U}^{D})$
indicates whether there has been an error in decoding the URLLC packet
(i.e., $E_{k}=1$), or it has been successfully decoded (i.e., $E_{k}=0$),
and $A_{k}$ is the same as above. It is easy to show that the factor
$(1-A_{k}E_{k})$ multiplying the eMBB signal indicates that the eMBB
signal (\ref{eq:-25}) is discarded only when the two following events
simultaneously happen: \textit{i)} there is a URLLC transmission (i.e.,
$A_{k}=1$, whose probability is $q$), and \textit{ii)} such URLLC
transmission is not successfully decoded ($E_{k}=1$, whose probability
is $\epsilon_{U}^{D}$). In turn, the factor $A_{k}(1-E_{k})$ multiplying
the URLLC signal implies that, if there is a URLLC transmission (i.e.,
$A_{k}=1$) and such transmission is successfully decoded at the EN$_{k}$
(i.e., $E_{k}=0$), then the URLLC signal is mitigated by analog SIC
so that only a $\rho$-fraction of it is forwarded to the BBU and
impairs the eMBB transmission.

The signal in (\ref{eq:-25}) received across all ENs and frequencies
in the case of NOMA by SIC can be equivalently written in matrix form
as 
\begin{equation}
\mathbf{Y}=\mathbf{X}\mathbf{H}\left(\mathbf{I}-\mathbf{A}\mathbf{E}\right)+\rho\beta\mathbf{U}\mathbf{A}\left(\mathbf{I}-\mathbf{E}\right)+\mathbf{Z},\label{eq:-26}
\end{equation}
where $\mathbf{E}=\text{diag}(E_{1},E_{2},...,E_{M})$. The eMBB UE
rate for NOMA by SIC can thus be computed by substituting signal (\ref{eq:-26})
in (\ref{eq:-28}) and the final result is in Lemma \ref{lem:In-the-case-1-1-1}.
\begin{lem}
\label{lem:In-the-case-1-1-1}In the given C-RAN architecture with
analog fronthaul links, for a given bandwidth amplification factor
$\eta\geq1$, the eMBB user rate under NOMA by SIC yields
\begin{equation}
R_{B}=\frac{\mu}{M}\mathbb{E}_{\mathbf{A},\mathbf{E}}\left[\text{\text{log}}\left(\text{det}\left(\mathbf{I}+P_{B}\mathbf{R}_{AE,z_{\text{eq}}}^{-1}\mathbf{H}_{AE}\mathbf{H}_{AE}^{T}\right)\right)\right],\label{eq:-29}
\end{equation}
where $\mathbf{H}_{AE}=\left(\mathbf{(I-AE)}\mathbf{H}\right)\otimes\mathbf{H}_{c}^{\eta}$
is the equivalent wireless plus cable channel in case of SIC; $\mathbf{E}=\text{diag}(E_{1},E_{2},...,E_{M})$
is a diagonal matrix whose $k$-th entry $E_{k}\sim\mathcal{B}(q\epsilon_{U}^{D})$
accounts for the probability that the URLLC signal is not successfully
decoded at EN$_{k}$; and $\mathbf{R}_{AE,z_{\text{eq}}}=\mathbf{R}_{z_{\text{eq}}}+\rho^{2}\beta^{2}P_{U}\left(\left(\mathbf{A}(\mathbf{I}-\mathbf{E})\right)\otimes\mathbf{H}_{c}^{\eta}\mathbf{H}_{c}^{\eta}\right)$
is the overall noise plus residual URLLC interference.
\end{lem}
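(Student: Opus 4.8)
The plan is to follow the derivation of Lemma~\ref{lem:In-the-case-1} for TIN in Appendix~\ref{subsec:Proof-of-Lemma}, modifying it in the two places where SIC differs: the eMBB signal is now thinned by the random factor $(\mathbf{I}-\mathbf{A}\mathbf{E})$, and only a residual $\rho$-fraction of the URLLC signal survives on the cells where decoding succeeds. First I would insert the SIC radio model (\ref{eq:-26}) into the fronthaul input--output relation (\ref{eq:-28}) of Lemma~\ref{lem:In-C-RAN-architecture}, giving
\[
\mathbf{R}=\left(\mathbf{H}_{c}^{\eta}\otimes\mathbf{I}_{l_{F}}\right)\left[\mathbf{X}\mathbf{H}(\mathbf{I}-\mathbf{A}\mathbf{E})+\rho\beta\mathbf{U}\mathbf{A}(\mathbf{I}-\mathbf{E})+\mathbf{Z}\right]+\mathbf{W},
\]
which separates the signal at the BBU into a useful eMBB term, a residual URLLC interference term, and the combined wireless-plus-cable noise already characterized in Lemma~\ref{lem:In-the-case}.

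Next I would condition on a realization of the Bernoulli matrices $\mathbf{A}$ and $\mathbf{E}$ and vectorize. Using $\text{vec}\!\left((\mathbf{H}_{c}^{\eta}\otimes\mathbf{I}_{l_{F}})\mathbf{X}\mathbf{N}\right)=\left(\mathbf{N}^{T}\otimes\mathbf{H}_{c}^{\eta}\otimes\mathbf{I}_{l_{F}}\right)\text{vec}(\mathbf{X})$ with $\mathbf{N}=\mathbf{H}(\mathbf{I}-\mathbf{A}\mathbf{E})$, together with the Kronecker identity $(\mathbf{P}\otimes\mathbf{Q})(\mathbf{P}\otimes\mathbf{Q})^{T}=(\mathbf{P}\mathbf{P}^{T})\otimes(\mathbf{Q}\mathbf{Q}^{T})$ used for $\mathbf{H}_{\text{eq}}$ in Lemma~\ref{lem:In-the-case}, the eMBB symbols see the effective channel $\mathbf{H}_{AE}=\left((\mathbf{I}-\mathbf{A}\mathbf{E})\mathbf{H}\right)\otimes\mathbf{H}_{c}^{\eta}$. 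Treating the residual URLLC term $\rho\beta\mathbf{U}\mathbf{A}(\mathbf{I}-\mathbf{E})$ as Gaussian noise --- the worst case, and consistent with a Gaussian URLLC codebook --- its covariance after propagation through the cable is $\rho^{2}\beta^{2}P_{U}\left(\mathbf{A}(\mathbf{I}-\mathbf{E})\otimes\mathbf{H}_{c}^{\eta}\mathbf{H}_{c}^{\eta}\right)$, which added to the combined noise covariance $\mathbf{R}_{z_{\text{eq}}}$ of Lemma~\ref{lem:In-the-case} produces exactly $\mathbf{R}_{AE,z_{\text{eq}}}$.

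With channel and noise-plus-interference covariance identified, the conditional eMBB mutual information is the Gaussian MIMO expression $\log\det(\mathbf{I}+P_{B}\mathbf{R}_{AE,z_{\text{eq}}}^{-1}\mathbf{H}_{AE}\mathbf{H}_{AE}^{T})$; averaging over $\mathbf{A},\mathbf{E}$ and normalizing by the frame length yields (\ref{eq:-29}). The prefactor $\mu/M$ appears exactly as in Lemma~\ref{lem:In-the-case}: both the effective channel Gram matrix and the full noise covariance carry a trailing $\mathbf{I}_{l_{F}}$ Kronecker factor, so the large $(Mn_{F})\times(Mn_{F})$ log-determinant collapses to $l_{F}=\mu n_{F}$ copies of the compact $(M/\mu)\times(M/\mu)$ determinant; dividing the resulting sum rate by the per-user frame normalization $M n_{T} n_{F}$, and noting that under NOMA eMBB occupies all $n_{T}$ minislots (so no $1-L_{U}^{-1}$ factor survives), leaves $l_{F}/(Mn_{F})=\mu/M$.

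I expect the main difficulty to be bookkeeping rather than conceptual: I must verify that the combined noise stays block-diagonal across the $1/\mu$ frequency groups so that the $\mathbf{I}_{l_{F}}$ factor cleanly pulls out of the determinant, and I must keep the two randomness sources straight --- $A_{k}$ controls whether any thinning or residual interference is active at all, while $E_{k}$ (relevant only when $A_{k}=1$) separates a decoding failure, which discards the eMBB signal via $(\mathbf{I}-\mathbf{A}\mathbf{E})$, from a success, which leaves only the $\rho$-scaled residual in $\mathbf{R}_{AE,z_{\text{eq}}}$. Checking the limits $\rho=0$ (perfect SIC) and $\rho=1$ (no cancellation) against the puncturing and TIN expressions offers a convenient sanity check.
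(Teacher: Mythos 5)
Your proposal is correct and follows exactly the route the paper intends: its own proof of this lemma is the one-line remark that it ``can be proved by following similar steps as for the proofs of the previous Lemmas,'' and you have simply carried out those steps, substituting the SIC signal model (\ref{eq:-26}) into (\ref{eq:-28}), vectorizing, using the idempotency of the diagonal Bernoulli matrices to get $\left(\mathbf{A}(\mathbf{I}-\mathbf{E})\right)^{2}=\mathbf{A}(\mathbf{I}-\mathbf{E})$ in the interference covariance, and pulling the trailing $\mathbf{I}_{l_{F}}$ factor out of the log-determinant to obtain the $\mu/M$ prefactor. Your sanity checks at $\rho=0$ and $\rho=1$ are consistent with the paper's own observation that for $\rho=1$ and small $\epsilon_{U}^{D}$ the SIC rate converges to the TIN rate.
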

\begin{proof}
Lemma 5 can be proved by following similar steps as for the proofs
of the previous Lemmas.
\end{proof}
SIC describes a more complex ENs architecture in which the URLLC signals,
if successfully decoded, are successively canceled from the eMBB signals
at the EN. However, in the case of imperfect interference cancellation,
i.e., $\rho>0$, the eMBB signal is still impaired by some residual
URLLC interference, which is accounted for by the overall noise covariance
$\mathbf{R}_{AE,z_{\text{eq}}}$ in (\ref{eq:-29}), similarly to
TIN. The URLLC arrival probability and the probability of successful
decoding of the URLLC packets are reflected by random matrices $\mathbf{A}$
and $\mathbf{E}$, respectively.

\section{Numerical Results\label{sec:Numerical-Results}}

Numerical results based on the previous theoretical discussion are
shown in this section with the aim of providing some useful intuitions
about the performance of C-RAN systems based on analog RoC in the
presence of both URLLC and eMBB services. Unless otherwise stated,
we consider the following settings: $M=6$ ENs, $n_{F}=60$ subcarriers ({This choice is motivated by the fact that, while still resembling
the properties of URLLC short-packet transmissions, $n_{F}=60$ is
a sufficiently long packet size to ensure tight lower and upper bounds
for the limited-blocklength channel capacity \cite{Durisi-Koch-Popovski_2016}}), $P_{B}=7\,$dB, $P_{U}=10\,$dB, URLLC channel gain $\beta^{2}=1$,
$P_{C}=7\,$dB, $l_{S}=4$, and, conventionally, $\epsilon_{U}=10^{-3}$.
In the case of OMA, the worst-case access latency for URLLC users
is set to $L_{U}=2$ minislots.

Figure \ref{fig:URLLC-and-eMBB} shows URLLC and eMBB per-UE rates for
both OMA and NOMA by varying the fronthaul crosstalk interference
power $\gamma^{2}$. We consider two values for the normalized bandwidth
$\mu$ of each copper cable, namely $\mu=1/l_{S}=1/4$ and $\mu=1$.
Please note that the first value corresponds to the minimal bandwidth, while
the latter enables each twisted-pair to carry the whole signal bandwidth.
For reference, the eMBB rates obtained in the case of ideal fronthaul
are shown for both OMA and NOMA. The inter-cell interference power
is set to $\alpha^{2}=0.2$, and the URLLC arrival probability to
$q=10^{-3}$. For NOMA, we consider here only puncturing.

The URLLC rates do not depend on cable crosstalk interference $\gamma^{2}$,
since URLLC packets are decoded at the EN and thus never forwarded
to the BBU over the fronthaul. Furthermore, with NOMA, the access
latency of URLLC is minimum, i.e., $L_{U}=1$, while for OMA it equals
$L_{U}=2$. However, Figure \ref{fig:URLLC-and-eMBB} shows that the
price to pay for this reduced latency is in terms of transmission
rate, which is lower than in the OMA case. On the contrary, in the
case of eMBB, NOMA allows for a communication at higher rates than
those achieved by OMA, thanks to the larger available bandwidth when
$q$ is small~enough. 
\begin{figure}
\begin{centering}
\includegraphics[width=0.8\columnwidth]{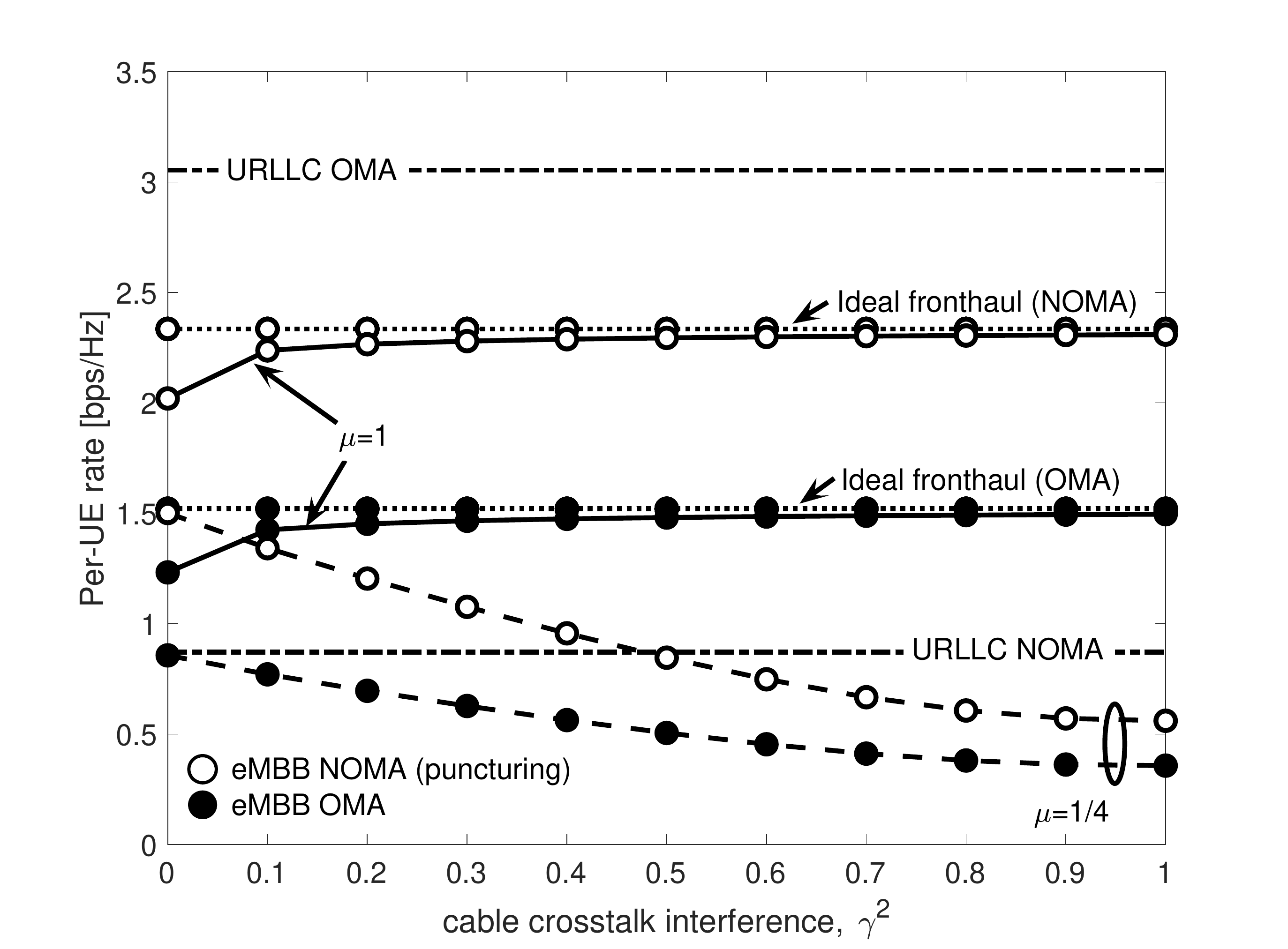}
\par\end{centering}
\caption{\label{fig:URLLC-and-eMBB}URLLC and eMBB per-UE rates as a function
of fronthaul crosstalk interference power $\gamma^{2}$ for OMA and
NOMA with puncturing.}
\end{figure}

On the subject of eMBB rates, it is interesting to discuss the interplay
between the normalized cable bandwidth $\mu$ and crosstalk power
$\gamma^{2}$. For $\mu=1$, the same signal is transmitted over all
the $l_{S}=4$ fronthaul twisted-pairs, and hence the four spatial
paths sum coherently over the cable, thus turning crosstalk into
a benefit. Hence, the eMBB rates under both OMA and NOMA increase
with $\gamma^{2}$ and ultimately converge to those achieved over
the ideal fronthaul ({In this work, we consider the same interference gain $\gamma$ for all fronthaul links. In practice, the performance boost shown in Figure 10 for increasing $\gamma$ and for $\mu=1$ would still be present, albeit to a different extent dependent on the channel realization, even if considering complex channel gain. However, this would require the use of more complex precoding techniques, such as Tomlinson-Harashima \cite{Hekrdla-Matera-etal_2015}, which require an estimate of the fronthaul channel. Since for wired fronthauling the channel is nearly static and time-invariant, channel state information can be easily obtained \cite{zafaruddin2017signal}}). For $\mu=1/4$ instead, disjoint portions of
the radio signal are transmitted over different interfering twisted-pairs,
and the performance progressively decrease with $\gamma^{2}$. The
leftmost portion of Figure \ref{fig:URLLC-and-eMBB} suggests that for
mild cable interference, even when the cable bandwidth is small (i.e.,
$\mu=1/4$), it is still possible to provide communication with acceptable
performance degradation, i.e., with a $\approx1$ bps/Hz loss from
the ideal fronthaul case for NOMA, and an even smaller loss for OMA.
However, when the cable crosstalk increases, the rate degradation
is severe, and fair performance are achieved only if the cable bandwidth
is large enough to accommodate the redundant transmission of radio
signals over all pairs, i.e., $\mu=1$.

Figure \ref{fig:URLLC-and-eMBB-1} shows URLLC and eMBB rates as a function
of the URLLC packet arrival probability $q$. eMBB rates under OMA
are compared with those achieved by NOMA under puncturing, TIN and
SIC. We consider here full cable bandwidth availability $\mu=1$ and
$\gamma^{2}=1$, so that, as in Figure \ref{fig:URLLC-and-eMBB}, the
rates achieved for both OMA and NOMA for low $q$ (say, $q<10^{-2}$)
coincide with those achieved over the ideal fronthaul. Inter-cell
interference is set to $\alpha^{2}=0.2$. As noted in \cite{Kassab},
under OMA,when $q$ increases, the probability of an URLLC packet
to be dropped due to blockage becomes very high, preventing URLLC
transmission from meeting the strict reliability constraints, and
results in a vanishing URLLC rate. This is unlike in NOMA, whereby
the URLLC rate is not affected by $q$, and the access latency is
minimal, i.e., $L_{U}=1$. For eMBB under NOMA, TIN always outperforms
puncturing. This is because TIN does not discard any received minislot,
thus contributing to the overall eMBB rate. The result is in contrast
with the conventional digital capacity-constrained fronthaul considered
in \cite{Kassab}. In fact, in the latter, for sufficient low $q$,
it is preferable not to waste fronthaul capacity resources by quantizing
samples received in minislots affected by URLLC interference in order
to increase the resolution of the interference-free samples. Additional
gains are achieved by SIC, which takes advantages of the high reliability,
and thus high probability to be cancelled, of the URLLC signal at
the EN.
\begin{figure}
\begin{centering}
\includegraphics[width=0.8\columnwidth]{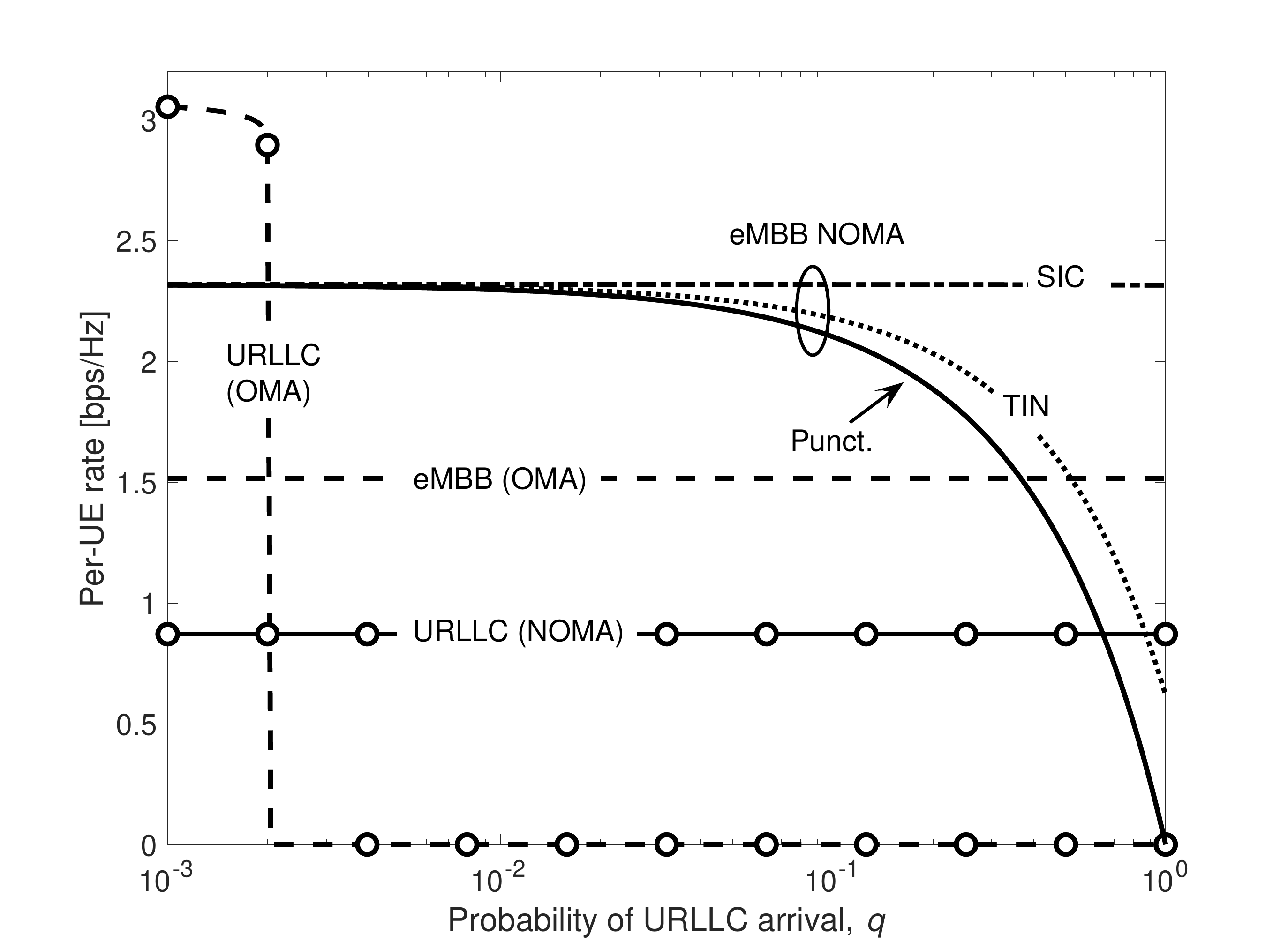}
\par\end{centering}
\caption{\label{fig:URLLC-and-eMBB-1}URLLC and eMBB rates vs probability of
URLLC arrival $q$ for OMA and NOMA by puncturing, treating interference
as noise (TIN), and successive interference cancellation (SIC). }
\end{figure}

Implementing SIC in a fully analog fashion is practically not trivial,
and there is generally some residual URLLC interference. The effect
of residual interference on the achievable eMBB rate is investigated
in Figure \ref{fig:eMBB-rates-for} for $q=0.3$, $\alpha^{2}=0.4$,
$\gamma^{2}=0.5,$ $\mu=1$ and for different power of URLLC UE $P_{U}$.
Once again, in case of perfect interference cancellation, i.e., $\rho=0$,
SIC approaches the ideal fronthaul performance, while for more severe
values of the residual interference power $\rho$, the eMBB performance
progressively decreases. Nevertheless, even in the worst-case SIC
scenario, i.e., $\rho=1$, the achievable rates are never worse
than those achieved by TIN irrespective of the value of $P_{U}$.
This is once again due to the high reliability of URLLC transmission.
It is in fact easy to prove that for $\rho=1$ and low values of $\epsilon_{U}^{D}$,
the SIC eMBB rate in (\ref{eq:-29}) converges to the one of TIN in
(\ref{eq:-30}).
\begin{figure}
\begin{centering}
\includegraphics[width=0.6\columnwidth]{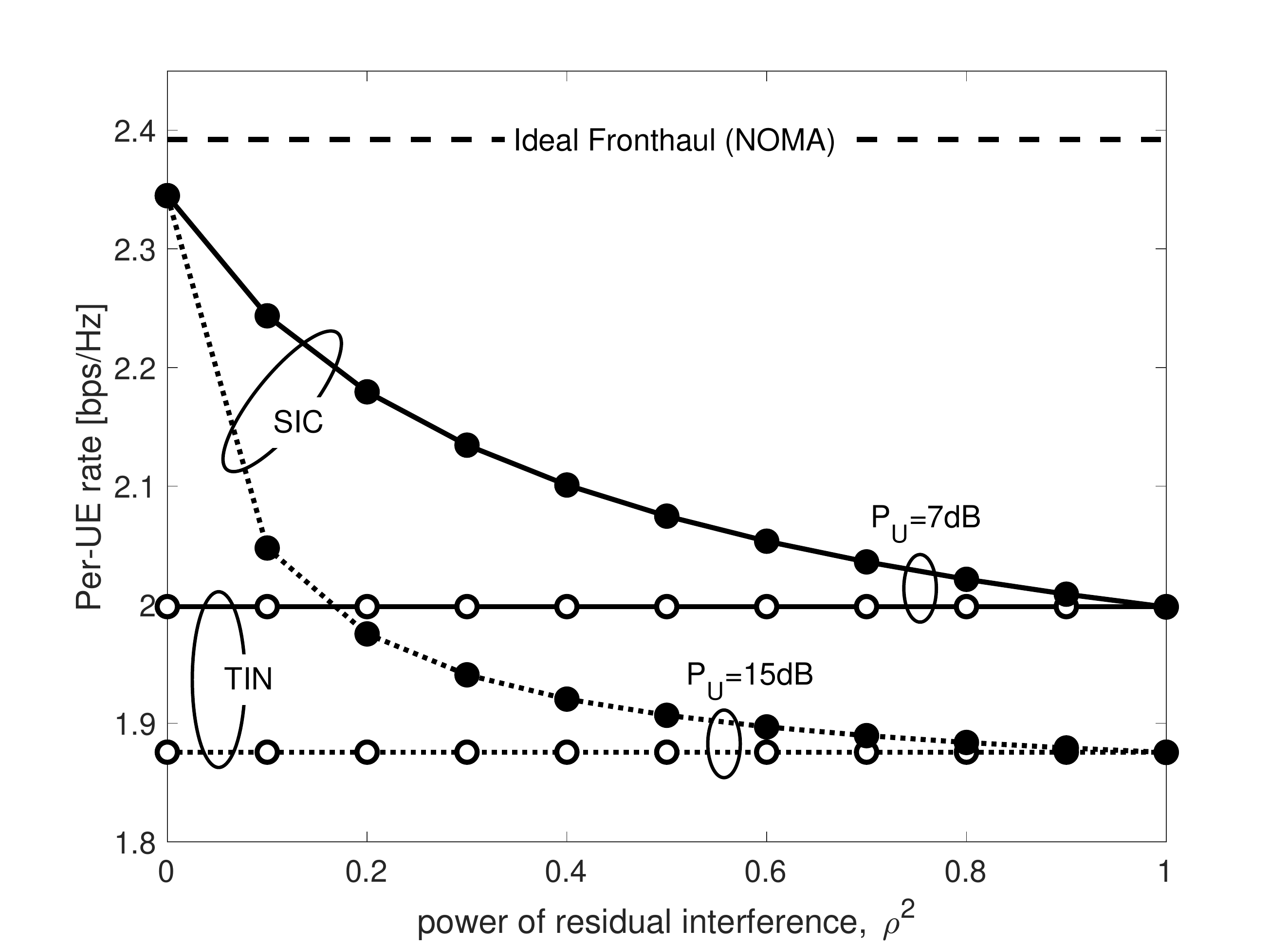}
\par\end{centering}
\caption{\label{fig:eMBB-rates-for}eMBB rates for NOMA by SIC vs power of
residual URLLC interference $\rho^{2}$.}
\end{figure}

For completeness, Figure \ref{fig:URLLC-and-eMBB-2} shows the trade-off
between eMBB and URLLC per-UE rates as a function of the access latency
$L_{U}$ for OMA and NOMA with puncturing, and considering $q=10^{-3}$,
$\alpha^{2}=0.2$ and $\gamma^{2}=0.5$. The behavior of the RoC-based
C-RAN system versus the access latency $L_{U}$ is similar to the
one observed for digital capacity-constrained fronthaul \cite{Kassab} for
both $\mu=1$ and $\mu=1/4$. While under OMA it is not possible to
achieve a non-zero URLLC rate at even relatively low access latency such
as $L_{U}>3$, NOMA provides a reliable communication with constant
minimal $L_{U}=1$ access latency, but with lower rate. For eMBB,
NOMA achieves an higher per-UE rate regardless of the value of the
normalized bandwidth $\mu$.
\begin{figure}
\begin{centering}
\includegraphics[width=0.6\columnwidth]{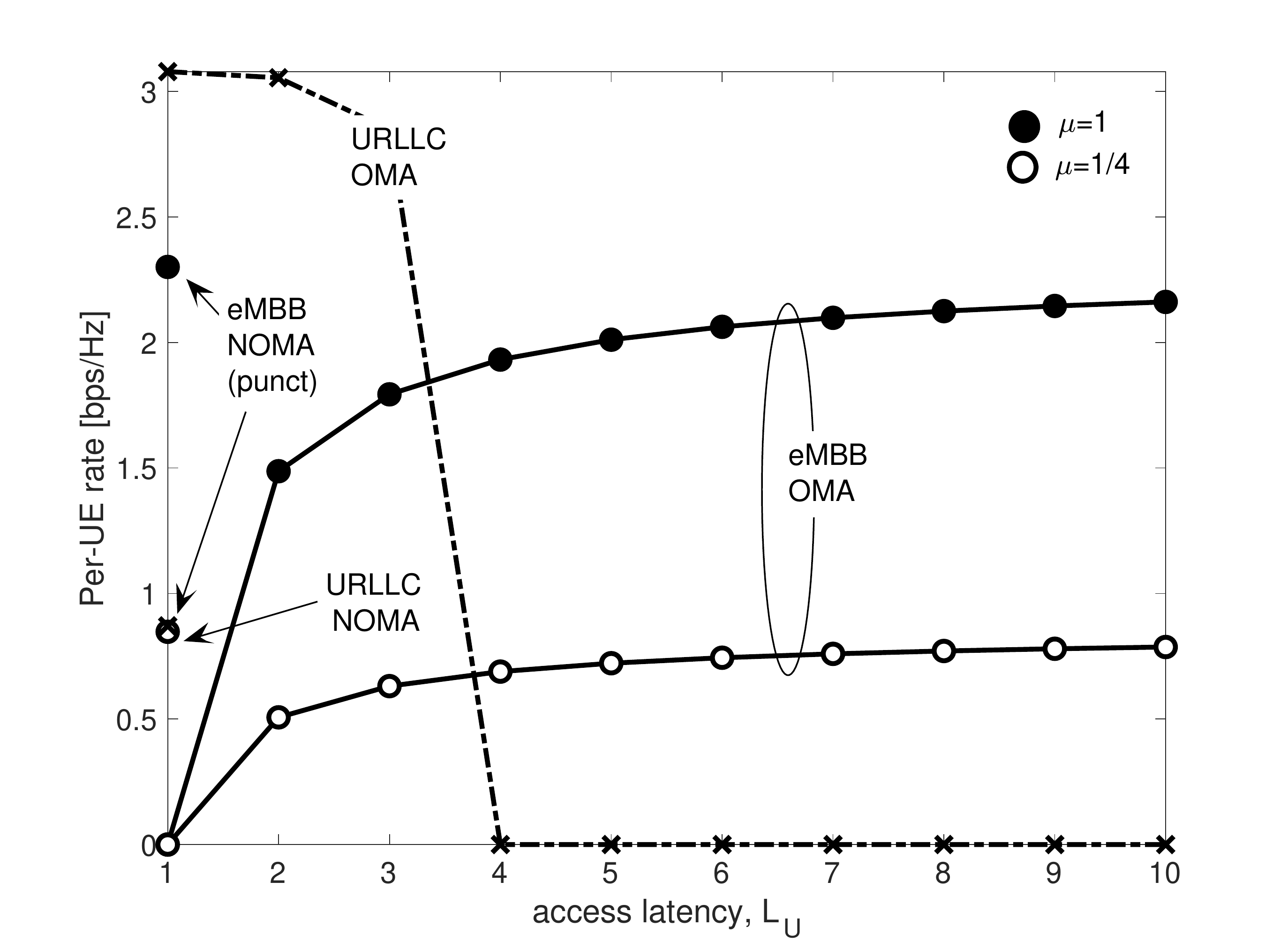}
\par\end{centering}
\caption{\label{fig:URLLC-and-eMBB-2}URLLC and eMBB per-UE rates as a function
of access latency $L_{U}$ for OMA and NOMA with puncturing.}
\end{figure}

\section{Conclusions\label{sec:Conclusions}}

This paper considers the coexistence of eMBB and URLLC services in
the uplink of an analog C-RAN architecture from an information theoretic
perspective. The rate expressions for URLLC and eMBB users under Orthogonal
and Non-Orthogonal Multiple Access (OMA and NOMA, respectively) have
been derived considering Analog Radio-over-Copper (A-RoC) as a sample
scenario, although the proposed model can be easily adapted to other
analog fronthaul technologies. For eMBB signals, performance have
been evaluated in terms of information rate, while for URLLC we also
took into account worst-case access latency and reliability. In case
of NOMA, different decoding strategies have been considered in order
to mitigate the impact of URLLC transmission on eMBB information rate.
In particular, the performance achieved by puncturing, considered for 5G standardization, have been compared with those achieved
by Treating URLLC Interference as Noise (TIN), and by Successive URLLC
Interference Cancellation (SIC). 

The analysis showed that NOMA allows for higher eMBB information rates
with respect to OMA, while guaranteeing a reliable low-rate URLLC
communication with minimal access latency. Furthermore, numerical
results demonstrated that, differently from the digital C-RAN architecture
based on limited-capacity fronthaul links, for analog C-RAN, TIN always
outperforms puncturing, and SIC achieves the best performance at the
price of an higher decoder complexity. 

As work in progress, the theoretical model can be extended to account
for fading channels or geometric mmWave-link channel models. 

Similarly, a frequency-dependent cable channel can be considered by
making the cable crosstalk coefficient $\gamma$ increase with cable
frequency \cite{Naqvi-Matera-etal_2017}. Another interesting research
direction is to consider the case in which the BBU has no knowledge
about the incoming signal, i.e., it is not able to detect the URLLC
transmissions, so that it is impossible for the BBU to choose the
proper metric for joint signal decoding \cite{zhang2012general,merhav1994information}.
Finally, the overall system can be extended to the case of multiple
users per-cell, where both ENs and users are equipped with multiple
antennas.

\vspace{5px}

\appendices

\section{Proof of Lemma 1\label{subsec:Example:-Signal-Combining}}

The proof of Lemma \ref{lem:In-C-RAN-architecture} is structured
in two main steps detailed in the following: \textit{(i)} MRC at the
cable output, \textit{(ii)} cable signal vectorization.

\subsection{Maximum Ratio Combining at the Cable Output}

The signal $\mathbf{R}_{k}\in\mathbb{C}^{l_{F}\times\frac{1}{\mu}}$
after the MRC at the cable output is
\begin{equation}
\begin{aligned}\mathbf{R}_{k} & =\tilde{\mathbf{R}}_{k}\mathbf{G}\\
 & =\left(\tilde{\mathbf{Y}}_{k}\mathbf{H}_{c}+\tilde{\mathbf{W}}_{k}\right)\mathbf{G},
\end{aligned}
\label{eq:-7}
\end{equation}
where the definitions of $\tilde{\mathbf{R}}_{k}$, $\mathbf{G}$,
$\tilde{\mathbf{Y}}_{k}$ and $\mathbf{H}_{c}$ follow from and (\ref{eq:-1}),
(\ref{eq:-20}), (\ref{eq:-5}) and (\ref{eq:-21}), respectively.
Thus, Equation (\ref{eq:-7}) can be rewritten as

\begin{equation}
\begin{aligned}\mathbf{R}_{k} & =\left(\mathbf{Y}_{k}\otimes\mathbf{1}_{\eta}^{T}\right)\left(\gamma\mathbf{1}_{l_{S}}\mathbf{1}_{l_{S}}^{T}+(1-\gamma)\mathbf{I}_{l_{S}}\right)\mathbf{G}+\tilde{\mathbf{W}}_{k}\mathbf{G}\\
 & =\underset{\mathbf{R}_{k}^{'}}{\underbrace{\gamma\left(\mathbf{Y}_{k}\otimes\mathbf{1}_{\eta}^{T}\right)\mathbf{1}_{l_{S}}\mathbf{1}_{l_{S}}^{T}\mathbf{G}}}+\underset{\mathbf{R}_{k}^{''}}{\underbrace{(1-\gamma)\left(\mathbf{Y}_{k}\otimes\mathbf{1}_{\eta}^{T}\right)\mathbf{G}}}+\underset{\mathbf{W}_{k}}{\underbrace{\tilde{\mathbf{W}}_{k}\mathbf{G}}}
\end{aligned}
,\label{eq:-8}
\end{equation}
where $\mathbf{W}_{k}=\tilde{\mathbf{W}}_{k}\mathbf{G}$ is the noise
post MRC. The first term in (\ref{eq:-8}) can be rewritten as
\begin{equation}
\begin{aligned}\mathbf{R}_{k}^{'} & =\frac{\gamma}{\eta}\left(\mathbf{Y}_{k}\otimes\mathbf{1}_{\eta}^{T}\right)\mathbf{1}_{l_{S}}\mathbf{1}_{l_{S}}^{T}\left(\mathbf{I}_{\frac{1}{\mu}}\otimes\mathbf{1}_{\eta}\right)\\
 & \overset{^{(a)}}{=}\gamma\left(\mathbf{Y}_{k}\otimes\mathbf{1}_{\eta}^{T}\right)\mathbf{1}_{l_{S}}\mathbf{1}_{\frac{1}{\mu}}^{T}\\
 & \overset{^{(b)}}{=}\gamma\eta\mathbf{Y}_{k}\mathbf{1}_{\frac{1}{\mu}}\mathbf{1}_{\frac{1}{\mu}}^{T},
\end{aligned}
\label{eq:-9}
\end{equation}
where $\overset{^{(a)}}{=}$ comes from the fact that $\mathbf{1}_{l_{S}}^{T}\left(\mathbf{I}_{\frac{1}{\mu}}\otimes\mathbf{1}_{\eta}\right)=\left(\mathbf{1}_{\frac{1}{\mu}}^{T}\otimes\mathbf{1}_{\eta}^{T}\right)\left(\mathbf{I}_{\frac{1}{\mu}}\otimes\mathbf{1}_{\eta}\right)=\eta\mathbf{1}_{\frac{1}{\mu}}^{T}$
due to the mixed-product property of Kronecker product operator $\left(\mathbf{A}\otimes\mathbf{B}\right)\left(\mathbf{C}\otimes\mathbf{D}\right)=\mathbf{A}\mathbf{C}\otimes\mathbf{B}\mathbf{D}$
(see \cite{spagnolini2018statistical}, Chapter 2), and, similarly,
$\overset{^{(b)}}{=}$ is obtained by $\left(\mathbf{Y}_{k}\otimes\mathbf{1}_{\eta}^{T}\right)\mathbf{1}_{l_{S}}\mathbf{1}_{\frac{1}{\mu}}^{T}=\left(\mathbf{Y}_{k}\otimes\mathbf{1}_{\eta}^{T}\right)\left(\mathbf{1}_{\frac{1}{\mu}}\mathbf{1}_{\frac{1}{\mu}}^{T}\otimes\mathbf{1}_{\eta}\right)=\eta\mathbf{Y}_{k}\mathbf{1}_{\frac{1}{\mu}}\mathbf{1}_{\frac{1}{\mu}}^{T}$.
Using similar arguments, the second term in (\ref{eq:-8}) simplifies
to
\begin{equation}
\begin{aligned}\mathbf{R}_{k}^{''} & =\frac{1-\gamma}{\eta}\left(\mathbf{Y}_{k}\otimes\mathbf{1}_{\eta}^{T}\right)\left(\mathbf{I}_{\frac{1}{\mu}}\otimes\mathbf{1}_{\eta}\right)\\
 & =(1-\gamma)\mathbf{Y}_{k}.
\end{aligned}
\label{eq:-10}
\end{equation}
Finally, substituting (\ref{eq:-9}) and (\ref{eq:-10}) in (\ref{eq:-8})
we obtain 
\begin{equation}
\mathbf{R}_{k}=\mathbf{Y}_{k}\mathbf{H}_{c}^{\eta}+\mathbf{W}_{k},\label{eq:-11}
\end{equation}
where 
\begin{equation}
\mathbf{H}_{c}^{\eta}=\gamma\eta\mathbf{1}_{\frac{1}{\mu}}\mathbf{1}_{\frac{1}{\mu}}^{T}+(1-\gamma)\mathbf{I}_{\frac{1}{\mu}}
\end{equation}
 is the equivalent cable channel matrix accounting for the bandwidth
amplification factor over cable $\eta$, and $\mathbf{Y}_{k}$ is
the radio signal reorganized in matrix form as in (\ref{eq:}).

\subsection{Cable Signal Vectorization}

The vector signal $\mathbf{r}_{k}\in\mathbb{C}^{n_{F}\times1}$ received
at the BBU from the $k$-th EN over all the radio frequency channels
can be obtained by vectorizing matrix $\mathbf{R}_{k}$ in (\ref{eq:-11})
as 
\begin{equation}
\begin{aligned}\mathbf{r}_{k} & =\text{vec}(\mathbf{Y}_{k}\mathbf{H}_{c}^{\eta}+\mathbf{W}_{k})\\
 & =\left(\mathbf{H}_{c}^{\eta}\otimes\mathbf{I}_{l_{F}}\right)\text{vec}(\mathbf{Y}_{k})+\text{vec}(\mathbf{W}_{k})\\
 & =\left(\mathbf{H}_{c}^{\eta}\otimes\mathbf{I}_{l_{F}}\right)\mathbf{y}_{k}+\mathbf{w}_{k},
\end{aligned}
\label{eq:-13}
\end{equation}
where $\mathbf{y}_{k}$ is the radio signal received at EN $k$-th.
The overall cable noise vector $\mathbf{w}_{k}$ can be rewritten
as
\begin{equation}
\begin{aligned}\mathbf{w}_{k} & =\text{vec}(\mathbf{W}_{k})\\
 & =\text{vec}(\tilde{\mathbf{W}}_{k}\mathbf{G})\\
 & =\left(\mathbf{G}^{T}\otimes\mathbf{I}_{l_{F}}\right)\text{vec}(\tilde{\mathbf{W}}_{k})\\
 & =\frac{1}{\eta}\left(\mathbf{I}_{\frac{1}{\mu}}\otimes\mathbf{1}_{\eta}^{T}\otimes\mathbf{I}_{l_{F}}\right)\tilde{\mathbf{w}}_{k}.
\end{aligned}
\end{equation}
It is important to notice that since the cable noise $\tilde{\mathbf{W}}_{k}$
is white Gaussian and uncorrelated over cable pairs (see (\ref{eq:-1})),
$\tilde{\mathbf{w}}_{k}=\text{vec}(\tilde{\mathbf{W}}_{k})$ is also
white Gaussian distributed as $\tilde{\mathbf{w}}_{k}\sim\mathcal{CN}(\mathbf{0},\mathbf{I}_{l_{S}l_{F}})$.
Hence, the covariance $\mathbf{R}_{w}$ of the overall cable noise
vector $\mathbf{w}_{k}$ yields
\begin{equation}
\begin{aligned}\mathbf{R}_{w} & =\mathbb{E}\left[\mathbf{w}_{k}\mathbf{w}_{k}^{H}\right]\\
 & =\frac{1}{\eta^{2}}\left(\mathbf{I}_{\frac{1}{\mu}}\otimes\mathbf{1}_{\eta}^{T}\otimes\mathbf{I}_{l_{F}}\right)\left(\mathbf{I}_{\frac{1}{\mu}}\otimes\mathbf{1}_{\eta}^{T}\otimes\mathbf{I}_{l_{F}}\right)^{T}\\
 & \overset{^{(a)}}{=}\frac{1}{\eta}\left(\mathbf{I}_{\frac{1}{\mu}}\otimes\mathbf{I}_{l_{F}}\right)\\
 & =\frac{1}{\eta}\mathbf{I}_{n_{F}},
\end{aligned}
\label{eq:-12}
\end{equation}
where the equality $^{(a)}$ comes again from the mixed-product property
of Kronecker product. Equation (\ref{eq:-12}) shows that the MRC allows
to take advantages from the signal redundancy over the cable, which
results in a reduction of the cable noise power by a factor of $\eta$.

The proof is completed by gathering the signals $\mathbf{r}_{k}$
(\ref{eq:-13}) received at the BBU from all ENs as
\begin{equation}
\begin{aligned}\mathbf{R} & =\left[\mathbf{r}_{1},\mathbf{r}_{2},...,\mathbf{r}_{M}\right]\\
 & =\left(\mathbf{H}_{c}^{\eta}\otimes\mathbf{I}_{l_{F}}\right)[\mathbf{y}_{1},\mathbf{y}_{2},...,\mathbf{y}_{M}]+[\mathbf{w}_{1},\mathbf{w}_{2},...,\mathbf{w}_{M}]\\
 & =\left(\mathbf{H}_{c}^{\eta}\otimes\mathbf{I}_{l_{F}}\right)\mathbf{Y}+\mathbf{W},
\end{aligned}
\label{eq:-16}
\end{equation}
where $\mathbf{Y}$ is the signal received by all ENs over all radio
channels in (\ref{eq:-14}). 

\section{Proof of Lemma 2 }\label{Lemma2}

By substituting signal $\mathbf{Y}$ in (\ref{eq:-15-1}) received
by all ENs over all radio channels in case of OMA in Equation (\ref{eq:-28}),
we obtain
\begin{equation}
\mathbf{R}=\left(\mathbf{H}_{c}^{\eta}\otimes\mathbf{I}_{l_{F}}\right)\mathbf{X}\mathbf{H}+\left(\mathbf{H}_{c}^{\eta}\otimes\mathbf{I}_{l_{F}}\right)\mathbf{Z}+\mathbf{W}.
\end{equation}
To compute the per-UE eMBB rate, a further vectorization
is needed, leading to
\begin{equation}
\begin{aligned}\mathbf{r}_{\text{}} & =\text{vec}(\mathbf{R})\\
 & =\text{vec}\left(\left(\mathbf{H}_{c}^{\eta}\otimes\mathbf{I}_{l_{F}}\right)\mathbf{X}\mathbf{H}\right)+\text{vec}\left(\left(\mathbf{H}_{c}^{\eta}\otimes\mathbf{I}_{l_{F}}\right)\mathbf{Z}\right)+\text{vec}(\mathbf{W})\\
 & =\left(\mathbf{H}^{T}\otimes\mathbf{H}_{c}^{\eta}\otimes\mathbf{I}_{l_{F}}\right)\mathbf{x}+\left(\mathbf{I}_{M}\otimes\mathbf{H}_{c}^{\eta}\otimes\mathbf{I}_{l_{F}}\right)\mathbf{z}+\mathbf{w}\\
 & =\mathbf{\bar{H}}_{\text{eq}}\mathbf{x}+\mathbf{\bar{z}}_{\text{eq}},
\end{aligned}
\label{eq:-22}
\end{equation}
where $\mathbf{\bar{H}}_{\text{eq}}=\mathbf{H}^{T}\otimes\mathbf{H}_{c}^{\eta}\otimes\mathbf{I}_{l_{F}}$
is the overall equivalent channel comprising both cable and radio
channels over all ENs, and $\mathbf{\bar{z}}_{\text{eq}}=\left(\mathbf{I}_{M}\otimes\mathbf{H}_{c}^{\eta}\otimes\mathbf{I}_{l_{F}}\right)\mathbf{z}+\mathbf{w}$
is the overall noise vector comprising both the vectorized radio noise
$\mathbf{z}=\text{vec}(\mathbf{Z})\sim\mathcal{CN}(\mathbf{0},\mathbf{I}_{n_{F}M})$
and the vectorized cable noise $\mathbf{w}=\text{vec}(\mathbf{W})\sim\mathcal{CN}(\mathbf{0},\frac{1}{\lambda^{2}\eta}\mathbf{I}_{n_{F}M})$,
where we recall that the scaling $\lambda$ is due to the cable power
constraints. Hence, the covariance of the equivalent noise $\bar{\mathbf{z}}_{\text{eq}}$
yields
\begin{equation}
\begin{aligned}\bar{\mathbf{R}}_{z_{\text{eq}}} & =\mathbb{E}\left[\mathbf{\bar{z}}_{\text{eq}}\mathbf{\bar{z}}_{\text{eq}}^{H}\right]\\
 & =\left(\mathbf{I}_{M}\otimes\mathbf{H}_{c}^{\eta}\otimes\mathbf{I}_{l_{F}}\right)\left(\mathbf{I}_{M}\otimes\mathbf{H}_{c}^{\eta}\otimes\mathbf{I}_{l_{F}}\right)^{H}+\frac{1}{\lambda^{2}\eta}\mathbf{I}_{n_{F}M}\\
 & =\mathbf{I}_{M}\otimes\mathbf{H}_{c}^{\eta}\mathbf{H}_{c}^{\eta}\otimes\mathbf{I}_{l_{F}}+\frac{1}{\lambda^{2}\eta}\mathbf{I}_{n_{F}M}.
\end{aligned}
\end{equation}
The eMBB per-UE rate under OMA is computed by
\begin{equation}
\begin{aligned}R_{B} & =\frac{(1-L_{U}^{-1})}{n_{F}M}I(\mathbf{r},\mathbf{x})\\
 & =\frac{(1-L_{U}^{-1})}{n_{F}M}\text{log}\left(\text{det}\left(\mathbf{I}+\bar{P}_{B}\bar{\mathbf{R}}_{\bar{z}_{\text{eq}}}^{-1}\mathbf{\bar{H}}_{\text{eq}}\mathbf{\bar{H}}_{\text{eq}}^{T}\right)\right),
\end{aligned}
\label{eq:-17}
\end{equation}
where $\bar{\mathbf{H}}_{\text{eq}}\mathbf{\bar{H}}_{\text{eq}}^{T}=\mathbf{H}\mathbf{H}\otimes\mathbf{H}_{c}^{\eta}\mathbf{H}_{c}^{\eta}\otimes\mathbf{I}_{l_{F}}$.
Finally, using the determinant property of the Kronecker product $\left|\mathbf{A}\otimes\mathbf{B}\right|=\left|\mathbf{A}\right|^{m}\left|\mathbf{B}\right|^{n},$
Equation (\ref{eq:-17}) simplifies to
\begin{equation}
\begin{aligned}R_{B} & =\frac{l_{F}(1-L_{U}^{-1})}{n_{F}M}\text{log}\left(\text{det}\left(\mathbf{I}+\bar{P}_{B}\mathbf{R}_{\bar{z}_{\text{eq}}}^{-1}\mathbf{H}_{\text{eq}}\mathbf{H}_{\text{eq}}^{T}\right)\right)\\
 & =\mu\frac{1-L_{U}^{-1}}{M}\text{log}\left(\text{det}\left(\mathbf{I}+\bar{P}_{B}\mathbf{R}_{z_{\text{eq}}}^{-1}\mathbf{H}_{\text{eq}}\mathbf{H}_{\text{eq}}^{T}\right)\right),
\end{aligned}
\label{eq:-23}
\end{equation}
where $\mathbf{H}_{\text{eq}}=\mathbf{H}\otimes\mathbf{H}_{c}^{\eta}$
and $\mathbf{R}_{z_{\text{eq}}}=\mathbf{I}_{M}\otimes\mathbf{H}_{c}^{\eta}\mathbf{H}_{c}^{\eta}+\frac{1}{\lambda^{2}\eta}\mathbf{I}_{\frac{M}{\mu}}$,
thus concluding the proof.

\section{Proof of Lemma 3\label{subsec:Proof-of-Lemma}}

By substituting the signal (\ref{eq:-14}) received at the $k$-th
EN under NOMA into (\ref{eq:-16}) and by following similar steps
as for the proof of Lemma \ref{lem:In-the-case}, the overall vector
signal received at the BBU comprising both cable and radio channels
over all ENs under NOMA by treating interference as noise yields
\begin{equation}
\mathbf{r}=\mathbf{\bar{H}}_{\text{eq}}\mathbf{x}+\beta\bar{\mathbf{A}}_{\text{eq}}\mathbf{u}+\mathbf{\bar{z}}_{\text{eq}},
\end{equation}
where the definitions of $\mathbf{\bar{H}}_{\text{eq}}$ and $\mathbf{\bar{z}}_{\text{eq}}$
are the same as in (\ref{eq:-22}), $\bar{\mathbf{A}}_{\text{eq}}=\mathbf{A}\otimes\mathbf{H}_{c}^{\eta}\otimes\mathbf{I}_{l_{F}}$
accounts for the relay of URLLC signal over the cable, and $\mathbf{u}$
is the vectorization of the URLLC signal matrix $\mathbf{U}$ in (\ref{eq:-14}).
Similarly to the proof of Lemma \ref{lem:In-the-case}, the eMBB per-UE
rate under NOMA by treating URLLC interference as noise can be computed
by
\begin{equation}
\begin{aligned}R_{B} & =\frac{1}{n_{F}M}I\left(\mathbf{r},\mathbf{x}\left|\mathbf{A}\right.\right)\\
 & =\frac{\mu}{M}\mathbb{E}_{\mathbf{A}}\left[\text{\text{log}}\left(\text{det}\left(\mathbf{I}+P_{B}\mathbf{R}_{A,z_{\text{eq}}}^{-1}\mathbf{H}_{\text{eq}}\mathbf{H}_{\text{eq}}^{T}\right)\right)\right]
\end{aligned}
\end{equation}
where the average is taken over all the possible values of matrix
$\mathbf{A}$, $P_{B}$ is the power of the eMBB user under NOMA and
$\mathbf{H}_{\text{eq}}$ is defined as in (\ref{eq:-23}). Finally,
the covariance $\mathbf{R}_{A,z_{\text{eq}}}$ of the noise plus URLLC
interference yields
\[
\begin{aligned}\mathbf{R}_{A,z_{\text{eq}}} & =\beta^{2}P_{U}\end{aligned}
\mathbf{A}_{\text{eq}}\mathbf{A}_{\text{eq}}^{T}+\mathbf{R}_{z_{\text{eq}}},
\]
where $\mathbf{A}_{\text{eq}}=\mathbf{A}\otimes\mathbf{H}_{c}^{\eta}$
and $\mathbf{R}_{z_{\text{eq}}}$ is the same as in (\ref{eq:-23}).
The proof is completed by noticing that $\mathbf{A}_{\text{eq}}\mathbf{A}_{\text{eq}}^{T}=(\mathbf{A}\otimes\mathbf{H}_{c}^{\eta})(\mathbf{A}\otimes\mathbf{H}_{c}^{\eta})^{H}=\mathbf{A}\otimes\mathbf{H}_{c}^{\eta}\mathbf{H}_{c}^{\eta}$,
since matrix $\mathbf{A}$ is idempotent and $\mathbf{H}_{c}^{\eta}$
symmetric.


\end{document}